\newtheorem{theorem}{Theorem}
\newtheorem{lemma}{Lemma}
\newtheorem{proposition}{Proposition}
\newcommand{\BEAS}{\begin{eqnarray*}}
\newcommand{\EEAS}{\end{eqnarray*}}
\newcommand{\mb}[1]{\mathbf{#1}}
\newcommand{\trace}[1]{\text{Tr}\left({#1}\right)}
\begin{document}

\title{The Weighted Sum Rate Maximization in MIMO Interference Networks: The Minimax Lagrangian Duality and Algorithm}

\author{Lijun Chen\ \ \ \ \ \ \ \ \quad \quad \quad \quad \quad \quad Seungil You
\thanks{L. Chen is with the College of Engineering and Applied Science, University of Colorado, Boulder, CO 80309, USA (email: lijun.chen@colorado.edu). S. You is with the Division of Engineering and Applied Science, California Institute of Technology, Pasadena, CA 91125, USA (email: syou@caltech.edu).}
}

\maketitle

\begin{abstract}
We take a new perspective on the weighted sum-rate maximization in multiple-input multiple-output (MIMO) interference networks, by formulating an equivalent max-min problem. This seemingly trivial reformulation has significant implications: the Lagrangian duality of the equivalent max-min problem provides an elegant way to establish the sum-rate duality between an interference network and its reciprocal when such a duality exists, and more importantly, suggests a novel iterative minimax algorithm for the weighted sum-rate maximization. Moreover, the design and convergence proof of the algorithm use only general convex analysis. They apply and extend to any max-min problems with similar structure, and thus provide a general class of algorithms for such optimization problems. This paper presents a promising step and lends hope for establishing a general framework based on the minimax Lagrangian duality for characterizing the weighted sum-rate and developing efficient algorithms for general MIMO interference networks. 

\end{abstract}

\begin{keywords}
Iterative minimax algorithm, Lagrangian duality, max-min optimization, weighted sum-rate maximization, interference networks, multiple-input multiple-output (MIMO).
\end{keywords}

\section{Introduction}\label{sect:intro}

The weighted sum-rate maximization is a fundamental problem in information theory and communications, and serves as a basis for many resource management and network design problems. It has a long history, with a rich literature from the classical water-filling structure for parallel Gaussian channels to more recent polite water filling algorithm \cite{liu2010duality} and iterative weighted MMSE algorithm \cite{shi2011iteratively} for MIMO interference channels, to just name a few. The weighted sum-rate maximization is in general a highly nonconvex and NP hard problem, and despite its importance and long history, remains open for general channels/networks. 

In this paper, we consider the weighted sum-rate maximization in a general MIMO interference network that consists of a set of interfering data links, each of them equipped with multiple antennas at the transmitter and receiver. The MIMO interference network, under many different names such as MIMO B-MAC and MIMO IBC, includes broadcast channels, multiple access channel, interference channels, small cell networks, and many other practical wireless networks as special cases. Specifically, we study the weighted sum-rate maximization with general linear constraints, assuming Gaussian transmit signal, Gaussian noise, and the availability of channel state information at the transmitter (Section \ref{sect:sm}). It  typifies a class of problems that are key to the next generation wireless communication networks where the interference is a limiting factor; and various algorithms have been proposed for this problem or its special cases, see, e.g., \cite{Caire_09ISIT_BC_linear_constraints,Caire_09sTSP_BC_intercell_interference,Jindal_IT05_IFBC,Giannakis_TIT11_IFPMIMO,Berry_MonoIFCpricing_ISIT09,Yu_IT04_MIMO_MAC_waterfilling_alg,Weiyu_IT06_DualIWF,Wei_07ITW_MultiuserWF,Zhang_09ISIT_BC_MAC_duality_linear_constraints,ZhangLan_09TWC_Weighted_rate_BC,ng2010linear,liu2010duality,shi2011iteratively,liu2012polite,Li-2014-Infocom}. 

In a recent work \cite{Li-2014-Infocom}, we and our coauthors propose a new algorithm for the weighted sum-rate maximization in the  MIMO interference network with the total power constraint and establish its convergence. The convergence proof suggests certain general structure behind the problem and its possible connection to the Lagrangian duality. In this paper, we identify such a structure and establish its connection to the minimax Lagrangian duality for the weighted sum-rate maximization with general linear constraints, and explore its implications for the sum-rate characterization and algorithm design.  Specifically, we reformulate the weighted sum-rate maximization as an equivalent max-min problem, by treating the interference-plus-noise covariance matrix definition as a constraint. This seemingly trivial reformulation has significant implications: the Lagrangian duality of the equivalent max-min problem provides an elegant way to establish the sum-rate duality between an interference network and its reciprocal when such a duality exists (Section \ref{sect:md}), and more importantly, suggests a novel algorithm, termed {\em the iterative minimax algorithm}, for the weighted sum-rate maximization (Section \ref{sect:ima}). Moreover, the design and convergence proof of the algorithm use only general convex analysis. They apply and extend to any max-min problems where the objective function is concave in the maximizing variables and convex in the minimizing variables and the constraints are convex, and thus provides a general class of algorithms for such optimization problems.

The iterative minimax algorithm is based partially on an explicit saddle point solution of certain max-min optimization (Section \ref{sect:mdd}). This explicit solution has been identified for the case where the matrices involved are all invertible in~\cite{Yu-2006-IT}. In contrast, we prove the explicit solution for any general matrices, as long as the objective function is well-defined in a proper sense (the Appendix). Our proof uses only general matrix analysis, and the construction and techniques used are expected to find applications in handling singularity issues that arise from the matrix form capacity formula. 

This paper benefits from the insight in and to some extent can be seen as a substantial extension of the seminal work by Yu \cite{Yu-2006-IT} that establishes uplink-downlink duality via minimax duality for the sum capacity of the Gaussian broadcast channel. Our model is much more general and the results expect to find broad applications, and we establish the explicit saddle point solution for the max-min optimization with general matrices, and more importantly, we develop a novel algorithm for the weighted sum-rate maximization.


\section{System Model}\label{sect:sm}

Consider a general interference network with a set $L$ of MIMO links or users, with the transmitter $t_l$ and receiver $r_l$ of link $l\in L$ being equipped with $n_l$  and $m_l$ antennas respectively. Let $\mathbf{x}_l\in\mathbb{C}^{n_l\times 1}$ denote the transmit signal of link $l$, which is assumed to be circularly symmetric complex Gaussian. The received signal $\mathbf{y}_l\in \mathbb{C}^{m_l\times 1}$ at the receiver $r_l$ can be written as
\begin{eqnarray}
\mathbf{y}_l=\sum_{k\in L}\mathbf{H}_{lk}\mathbf{x}_k +\mathbf{w}_l, \label{eq:rs} 
\end{eqnarray}
where $\mathbf{H}_{lk}\in \mathbb{C}^{m_l\times n_l}$ denotes the channel matrix from the transmitter $t_k$ to the receiver $r_l$, and $\mathbf{w}_l\in \mathbb{C}^{m_l\times 1}$ denotes the additive circularly symmetric complex Gaussian noise with identity covariance matrix. 

The interference network defined above is very general and includes as special cases many practical channels and networks such as broadcast channels, multiple access channels, small cell networks, and heterogeneous networks, etc.

\subsection{The power covariance constraints}
Denote by $\mathbf{\Sigma}_l\succeq 0$ the covariance matrix of the transmit signal $\mathbf{x}_l,~l\in L$. We now specify the constraints on these power covariance matrices.

Assume that the links are grouped into a set $S$ of non-empty subsets $L^s,~s\in S$ that cover all of $L$. Each subset $L^s$ may correspond to those links that are controlled or managed by a certain entity or for a certain purpose. These subsets may overlap with each other. For each link $l\in L$, denote by $S^l$ the set of those subsets that include the link, i.e., $S^l=\{s\in S| l\in L^s\}$.

Each link $l\in L$ is associated with an $n_l\times n_l$ constraint matrix $\mathbf{Q}_l^s\succ 0$ for each $s\in S^l$; and any two of these matrices may be identical. We assume that each group of links $L^s,~s\in S$ is subject to a linear power covariance constraint as follows: 
\begin{eqnarray}
\sum_{l\in L^s}\trace{\mathbf{\Sigma}_l \mathbf{Q}_l^s}\leq 1,~s\in S.\label{eq:pc}
\end{eqnarray}

The constraint (\ref{eq:pc}) is very general and captures all reasonble linear power constraints. For example, when there is only a budget $P_T$ on the total power of all links as considered in many existing work such as \cite{Li-2014-Infocom}, the cardinality $|S|=1$ and $\mathbf{Q}_l^s=\frac{1}{P_T}\mathbf{I}$. When there is only a per-link power budget $p_l,~l\in L$, each group $L^s$ contains only one link and $\mathbf{Q}_l^s=\frac{1}{P_l}\mathbf{I}$. Each group $L^s, s\in S$ may also represent those links or users in a cell of a microcell network and each cell $s$ is subject to a total power budget $P_s$. In this scenario, the subsets $L^s$ are non-overlapping and $\mathbf{Q}_l^s=\frac{1}{P_s}\mathbf{I},~\forall l\in L^s$. 

\remark{We have assumed linear power covariance constraints. However, as will be seen later, our theory development and algorithm design are based on general convex analysis, so the results in this paper can be extended to the network with nonlinear convex power covariance constraints.}

\subsection{The weighted sum-rate maximization}\label{subsec:wsrm}

Assume that the channel state information is known. For given power covariance matrix $\mathbf{\Sigma}_l,~l\in L$, an achievable rate $R_l$ of the link $l$ is given by
\begin{eqnarray}
R_l=\log{\left|\mathbf{I} + \mathbf{H}_{ll}\mathbf{\Sigma}_l \mathbf{H}_{ll}^+ \left(\mathbf{I}+\sum_{k\in L\backslash \{l\}} \mathbf{H}_{lk}\mathbf{\Sigma}_k \mathbf{H}_{lk}^+\right)^{-1}\right|}, \label{eq:ir}
\end{eqnarray}
where $|\cdot|$ denotes the matrix determinant and the interferences from other links are treated as noise. Assume that each link $l\in L$ is associated with a weight $w_l>0$. We aim to allocate power for each link so as to maximize the weighted sum-rate subject to the power constraints:
\begin{eqnarray}
\max_{\mathbf{\Sigma}_l\succeq 0}&&\sum_{l\in L} w_l R_l \label{eq:srmo}\\
\mbox{s.t.} && \sum_{l\in L^s}\trace{\mathbf{\Sigma}_l \mathbf{Q}_l^s}\leq 1,~s\in S. \label{eq:srmc}
\end{eqnarray}
The weighted sum-rate maximization is in general a hard nonconvex problem. It is a fundamental problem in information theory and communications and serves as a basis for many resource management and network design problems, while still remains open for general channels/networks.

\section{The minimax Lagrangian duality}\label{sect:md}
In this section, we will reformulate the weighted sum-rate maximization as an equivalent max-min problem, by treating the interference-plus-noise covariance matrix definition as a constraint. This seemingly trivial reformulation has significant implications: the Lagrangian duality of the equivalent max-min problem provides an elegant way to establish the sum-rate duality between an interference network and its reciprocal when such a duality exists, and more importantly, suggests a new algorithm for the weighted sum-rate maximization. 

\subsection{The minimax Lagrangian duality}\label{sect:mdd}
Denote by $\mathbf{\Omega}_l,~l\in L$ the interference-plus-noise covariance matrix at the receiver $r_l$, i.e.,
\begin{eqnarray}
\mathbf{\Omega}_l=\mathbf{I}+\sum_{k\in L\backslash \{l\}} \mathbf{H}_{lk}\mathbf{\Sigma}_k \mathbf{H}_{lk}^+.\label{eq:ipn}
\end{eqnarray}
We can rewrite the weighted sum-rate maximization (\ref{eq:srmo})-(\ref{eq:srmc}) equivalently as the following max-min problem:
\begin{eqnarray}
\max_{\mathbf{\Sigma}_l\succeq 0}\min_{\mathbf{\Omega}_l\succeq 0}&&\hspace{-3mm} \sum_{l\in L} w_l \left(\log{\left|\mathbf{\Omega}_l+ \mathbf{H}_{ll}\mathbf{\Sigma}_l \mathbf{H}_{ll}^+\right|}- \log{\left|\mathbf{\Omega}_l\right| }\right) \label{eq:maxmin-ob}\\
\mbox{s.t.}&&\hspace{-3mm}\sum_{l\in L^s}\trace{\mathbf{\Sigma}_l\mathbf{Q}_l^s}\leq 1,~s\in S, \label{eq:maxmin-con1}\\
&& \hspace{-3mm}\mathbf{\Omega}_l=\mathbf{I}+\sum_{k\in L\backslash \{l\}} \mathbf{H}_{lk}\mathbf{\Sigma}_k \mathbf{H}_{lk}^+,~l\in L.\label{eq:maxmin-con2}
\end{eqnarray}
Note that, when $\mathbf{H}_{ll}\mathbf{\Sigma}_l \mathbf{H}_{ll}^+$ is not of full rank, the above problem is equivalent to a {\em truncated system} where $\mb{\Omega}_l$ is restricted to $\mb{\Omega}_l=\mathbf{H}_{ll}\mathbf{X}_l \mathbf{H}_{ll}^+,~\mathbf{X}_l\succeq 0$. Intuitively, this follows from the fact that when the signal at a channel is zero, it does not matter what the interference-plus-noise is, in terms of the achieved rate; mathematically, this causes technical difficulty regarding singular matrices; see the Appendix for more detail and insight. 

The objective function of problem \eqref{eq:maxmin-ob}-\eqref{eq:maxmin-con2}
\begin{eqnarray*}
\mathcal{F}(\mathbf{\Sigma}, \mathbf{\Omega}) = \sum_{l\in L} w_l \left( \log{\left|\mathbf{\Omega}_l+ \mathbf{H}_{ll}\mathbf{\Sigma}_l \mathbf{H}_{ll}^+\right|}- \log{\left|\mathbf{\Omega}_l\right|} \right) 
\end{eqnarray*}
is concave in $\mathbf{\Sigma}$ and convex in $\mathbf{\Omega}$. So, the max-min is equal to min-max, and the optimum is a saddle point. Consider the Lagrangian 
\begin{eqnarray*}
&&\mathcal{L}(\mathbf{\Sigma}, \mathbf{\Omega}, \mathbf{\Lambda}, \mathbf{\mu})\\
&=& \mathcal{F}(\mathbf{\Sigma}, \mathbf{\Omega}) +\sum_{s\in S}\mu_s\left(1- \sum_{l\in L^s} \trace{\mathbf{\Sigma}_l\mathbf{Q}_l^s}\right)\\
&& +\sum_{l\in L}\trace{\mathbf{\Lambda}_l (\mathbf{\Omega}_l - \mathbf{I} - \sum_{k\in L\backslash \{l\}} \mathbf{H}_{lk}\mathbf{\Sigma}_{k}\mathbf{H}_{lk}^+)},
\end{eqnarray*}
where $\mu_s \geq 0$ is the dual variable associated with the power constraint (\ref{eq:maxmin-con1}) and $\mathbf{\Lambda}_l\succeq 0$ is the dual variable associated with the interference-plus-noise covariance matrix definition (\ref{eq:maxmin-con2}).\footnote{Even though equation \eqref{eq:maxmin-con2} is an equality constraint, the dual feasibility requires $\mathbf{\Lambda}_l\succeq 0$.} For any given $(\mathbf{\Lambda}, \mathbf{\mu})$, $\mathcal{L}$ is concave in $\mathbf{\Sigma}$ and convex in $\mathbf{\Omega}$ as $\mathcal{F}$ is.  

Consider  the first order condition (part of the KKT condition \cite{Boyd}) for the optimum:\footnote{Note that the first oder condition does not hold for all dual variables, but only for those that satisfy the dual feasibility condition. We only need to consider those feasible dual variables~\cite{Boyd}.}
\begin{eqnarray}
w_l \mathbf{H}_{ll}^{+} \left(\mathbf{\Omega}_l + \mathbf{H}_{ll}\mathbf{\Sigma}_{l}\mathbf{H}_{ll}^{+}\right)^{-1} \mathbf{H}_{ll}=\mb{\Phi}_l,\label{eq:kkt1}\\
w_l \left(\mathbf{\Omega}_l^{-1}-\left(\mathbf{\Omega}_l + \mathbf{H}_{ll}\mathbf{\Sigma}_{l}\mathbf{H}_{ll}^{+}\right)^{-1}\right)  =\mathbf{\Lambda}_l,\label{eq:kkt2}
\end{eqnarray}
where
\begin{eqnarray*}
\mb{\Phi}_l = \sum_{s\in S_l}\mu_s \mathbf{Q}_l^s + \sum_{k \in L\backslash \{l\}} \mathbf{H}_{kl}^+ \mathbf{\Lambda}_k \mathbf{H}_{kl}.
\end{eqnarray*}
For any given feasible dual variable $(\mathbf{\Lambda}, \mathbf{\mu})$, the above condition gives the saddle point condition of Lagrangian $\mathcal{L}$ as a function of $(\mathbf{\Sigma}, \mathbf{\Omega})$; and when $(\mathbf{\Lambda}, \mathbf{\mu})$ is a dual optimum, solving (\ref{eq:kkt1})-(\ref{eq:kkt2}) gives a primal optimum \cite{Boyd}. In the next section we will exploit this fact to design a novel algorithm to solve the weighted sum-rate maximization. 

\begin{lemma}\label{thm:es}
Given feasible dual variables $( \mb{\Phi}, \mb{\mu})$, an explicit solution $(\mb{\Sigma}, \mb{\Omega})$ for the saddle point equations (\ref{eq:kkt1})-(\ref{eq:kkt2}) is given by:\footnote{Note that at an optimum, with general channel matrix, equation \eqref{eq:es1} may only give a solution of the equivalent truncated system but not the solution of the original max-min problem. In order to obtain a solution for $\mb{\Omega}$ of the original problem, we should use a generalized solution $\mathbf{\Omega}_{l}= w_l \mathbf{H}_{ll} \left(\mb{\Phi}_l + \mathbf{H}_{ll}^+\mb{\Lambda}_l\mathbf{H}_{ll}\right)^{-1} \mathbf{H}_{ll}^+ + \mathbf{\Omega}_{l}^c$. Here $\mathbf{\Omega}_{l}^c$ should satisfy certain proper condition, but  at an optimum it can be easily determined according to equation \eqref{eq:maxmin-con2}. We will not elaborate on this mathematical peculiarity, as it does not affect the results presented in this paper.}
\begin{eqnarray}
w_l \mathbf{H}_{ll} \left(\mb{\Phi}_l + \mathbf{H}_{ll}^+\mb{\Lambda}_l\mathbf{H}_{ll}\right)^{-1} \mathbf{H}_{ll}^+ = \mathbf{\Omega}_{l},\label{eq:es1}\\
w_l \left(\mb{\Phi}_l^{-1}-\left(\mb{\Phi}_l + \mathbf{H}_{ll}^+ \mb{\Lambda}_l \mathbf{H}_{ll}\right)^{-1}\right)  =\mathbf{\Sigma}_l.\label{eq:es2}
\end{eqnarray}
\end{lemma}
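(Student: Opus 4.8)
The plan is to read Lemma~\ref{thm:es} as an explicit inversion of the coupled first-order conditions \eqref{eq:kkt1}--\eqref{eq:kkt2}: for each link $l$ the pair $(\mb{\Sigma}_l,\mb{\Omega}_l)$ is recovered from $(\mb{\Phi}_l,\mb{\Lambda}_l)$ in isolation, since $\mb{\Phi}_l$ already absorbs the cross-link coupling through the term $\sum_{k\neq l}\mathbf{H}_{kl}^{+}\mb{\Lambda}_k\mathbf{H}_{kl}$. I would first handle the generic case in which $\mathbf{H}_{ll}$ is square and invertible and $\mb{\Phi}_l\succ 0$, which is pure algebra. Equation \eqref{eq:kkt1} rearranges to $\mathbf{\Omega}_l+\mathbf{H}_{ll}\mathbf{\Sigma}_l\mathbf{H}_{ll}^{+}=w_l\,\mathbf{H}_{ll}\mb{\Phi}_l^{-1}\mathbf{H}_{ll}^{+}$; feeding this into \eqref{eq:kkt2} and writing $\mb{\Lambda}_l=(\mathbf{H}_{ll}^{+})^{-1}(\mathbf{H}_{ll}^{+}\mb{\Lambda}_l\mathbf{H}_{ll})\mathbf{H}_{ll}^{-1}$ gives $\mathbf{\Omega}_l^{-1}=w_l^{-1}(\mathbf{H}_{ll}^{+})^{-1}\!\big(\mb{\Phi}_l+\mathbf{H}_{ll}^{+}\mb{\Lambda}_l\mathbf{H}_{ll}\big)\mathbf{H}_{ll}^{-1}$, i.e. \eqref{eq:es1} after one inversion; subtracting \eqref{eq:es1} from the displayed expression for $\mathbf{\Omega}_l+\mathbf{H}_{ll}\mathbf{\Sigma}_l\mathbf{H}_{ll}^{+}$ and cancelling the invertible factors $\mathbf{H}_{ll}$ and $\mathbf{H}_{ll}^{+}$ yields \eqref{eq:es2}. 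Running the substitution in the other direction confirms that \eqref{eq:es1}--\eqref{eq:es2} do solve \eqref{eq:kkt1}--\eqref{eq:kkt2}: the intermediate identity $\mathbf{\Omega}_l+\mathbf{H}_{ll}\mathbf{\Sigma}_l\mathbf{H}_{ll}^{+}=w_l\mathbf{H}_{ll}\mb{\Phi}_l^{-1}\mathbf{H}_{ll}^{+}$ telescopes out, and \eqref{eq:kkt1}--\eqref{eq:kkt2} follow by a further inversion. I would also record that the constructed point is admissible: $\mathbf{\Omega}_l\succeq 0$ is immediate from \eqref{eq:es1}, and $\mathbf{\Sigma}_l\succeq 0$ from \eqref{eq:es2} because $\mb{\Phi}_l\preceq\mb{\Phi}_l+\mathbf{H}_{ll}^{+}\mb{\Lambda}_l\mathbf{H}_{ll}$ forces $\mb{\Phi}_l^{-1}\succeq(\mb{\Phi}_l+\mathbf{H}_{ll}^{+}\mb{\Lambda}_l\mathbf{H}_{ll})^{-1}$ by antitonicity of the matrix inverse.

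The substance is the general case, where $\mathbf{H}_{ll}$ may be rectangular or rank-deficient and $\mathbf{\Omega}_l$ may be singular, so that the inverses written in \eqref{eq:kkt1}--\eqref{eq:kkt2} and \eqref{eq:es1}--\eqref{eq:es2} are not literally defined. Here I would first make precise the ``proper sense'' in which $\mathcal{F}$ and the equations are to be read: on the receiver side one passes to the truncated system $\mathbf{\Omega}_l=\mathbf{H}_{ll}\mathbf{X}_l\mathbf{H}_{ll}^{+}$ noted after \eqref{eq:maxmin-con2}, which confines $\mathbf{\Omega}_l$ and $\mathbf{H}_{ll}\mathbf{\Sigma}_l\mathbf{H}_{ll}^{+}$ to the range of $\mathbf{H}_{ll}$, the complementary component being the $\mathbf{\Omega}_l^{c}$ of the lemma's footnote, which is inert in $\mathcal{F}$ and pinned down only by \eqref{eq:maxmin-con2} at an optimum. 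Then I would take a thin SVD $\mathbf{H}_{ll}=\mathbf{U}_l\mathbf{D}_l\mathbf{V}_l^{+}$ with $\mathbf{D}_l\succ 0$ square, conjugate the receiver-side matrices $\mathbf{\Omega}_l,\mathbf{\Lambda}_l$ by $\mathbf{U}_l$ and the transmitter-side matrices $\mathbf{\Sigma}_l,\mb{\Phi}_l$ by $\mathbf{V}_l$: the rank-deficiency is then quarantined into complementary blocks, the core blocks satisfy exactly the invertible-case equations with $\mathbf{D}_l$ in place of $\mathbf{H}_{ll}$, and the algebra of the first paragraph applies verbatim, the Moore--Penrose pseudoinverse reassembling the result into the closed forms \eqref{eq:es1}--\eqref{eq:es2}. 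A perturbation/continuity route --- regularize $\mathbf{H}_{ll}$ and $\mb{\Phi}_l$ to full rank, apply the invertible case, let the regularization vanish --- is a fallback, but the SVD-block reduction is cleaner and keeps the construction explicit; the degenerate subcase $\mb{\Phi}_l\succeq 0$ without strict definiteness is absorbed into the same pseudoinverse formalism.

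I expect the main obstacle to be precisely this singularity bookkeeping, and two points need care. First, the truncated-system reformulation must be shown genuinely equivalent to \eqref{eq:maxmin-ob}--\eqref{eq:maxmin-con2}: the intuition ``when a channel carries no signal the interference-plus-noise there is irrelevant to the rate'' has to be turned into a precise statement about which components of $\mathbf{\Omega}_l$ actually enter $\mathcal{F}$ and about how the inner $\min_{\mathbf{\Omega}_l}$ interacts with the affine constraint \eqref{eq:maxmin-con2}. Second, conjugation by the SVD must be shown to preserve saddle points --- a solution of the core equations, padded with the appropriate complementary blocks, is a solution of the original system and conversely --- so that nothing is lost in passing to and from the reduced problem. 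Once invertibility is in hand the verification of the first paragraph is routine, so the delicate material all lives in the reduction; accordingly I would present the invertible-case algebra in the body and relegate the full singular-case treatment to the Appendix.
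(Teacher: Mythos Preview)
Your proposal is correct and shares the paper's two-stage structure: elementary algebra in the invertible case, then a reduction for singular $\mathbf{H}_{ll}$. The invertible-case computation and the admissibility checks ($\mb{\Sigma}_l\succeq 0$, $\mb{\Omega}_l\succeq 0$) are exactly right.

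The singular-case routes differ. You reduce via a (thin) SVD of $\mathbf{H}_{ll}$, conjugating receiver-side matrices by $\mathbf{U}_l$ and transmitter-side by $\mathbf{V}_l$, and then rerun the invertible algebra on the core blocks. The paper instead argues at the level of the max-min \emph{objective} rather than the saddle-point equations. It first proves that the inner minimum over $\mb{\Omega}$ is attained with $\mb{\Omega}=\mathbf{H}\mathbf{X}\mathbf{H}^{+}$ (your truncated system), but via a \emph{simultaneous} congruence diagonalization of the two positive-semidefinite matrices $\mb{\Lambda}$ and $\mathbf{H}\mb{\Sigma}\mathbf{H}^{+}$ --- not an SVD of $\mathbf{H}$ --- followed by a block-by-block analysis of where the minimizer must live. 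Once the truncation is justified, the paper does no SVD block bookkeeping: it substitutes $\mb{\Omega}=\mathbf{H}\mathbf{X}\mathbf{H}^{+}$ into the objective and uses the determinant identity $\log|\mathbf{H}(\mathbf{X}{+}\mb{\Sigma})\mathbf{H}^{+}|-\log|\mathbf{H}\mathbf{X}\mathbf{H}^{+}|=\log|\mathbf{X}{+}\mb{\Sigma}|-\log|\mathbf{X}|$ (with a $\kappa\mathbf{I}$ regularization when $\mathbf{H}^{+}\mathbf{H}$ is singular) to collapse everything to a problem in $(\mb{\Sigma},\mathbf{X})$ whose stationarity conditions read $(\mathbf{X}+\mb{\Sigma})^{-1}=\mb{\Phi}$ and $\mathbf{X}^{-1}=\mb{\Phi}+\mathbf{H}^{+}\mb{\Lambda}\mathbf{H}$, giving \eqref{eq:es1}--\eqref{eq:es2} in one line. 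The paper's determinant-cancellation is shorter and makes the $(\mb{\Sigma},\mb{\Omega})\leftrightarrow(\mb{\Lambda},\mb{\Phi})$ symmetry transparent; your SVD route is more concrete and stays with the equations throughout, which avoids the extended-logdet formalism but costs more block bookkeeping --- in particular you will need to verify that any component of $\mb{\Phi}_l$ outside the range of $\mathbf{H}_{ll}^{+}$ is absorbed by the (implicit) multiplier for $\mb{\Sigma}_l\succeq 0$, something the paper's objective-level argument sidesteps automatically.
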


\vspace{3mm}
The solution (\ref{eq:es1})-(\ref{eq:es2}) is motivated by \cite{Yu-2006-IT} that focuses on an (primal-dual) optimum and where correspondingly the optimal power covariance matrix $\mb{\Sigma}_l$ and the interference-plus-noise matrix $\mb{\Omega}_l$ are assumed to be positive definite and the channel matrix $\mb{H}_{ll}$ is assumed to be square and invertible. Here, the solution is for any given feasible dual variables, and the power covariance matrix and the interference-plus-noise matrix are positive semidefinite and the channel matrix can be any general matrix. However, the solution is for an equivalent, {\em truncated system} where we ignore the interference-plus-noise of a channel whose signal is zero, and ``-1'' denotes pseudo inverse if the matrix involved is singular. The proof of Lemma \ref{thm:es} is rather involved, and is presented in the Appendix.

Equations (\ref{eq:kkt1})-(\ref{eq:kkt2}) and equations (\ref{eq:es1})-(\ref{eq:es2}) have similar structures, which can be exploited to establish the sum-rate duality between an interference network and its reciprocal based on the Lagrangian dual of the (truncated) max-min problem \eqref{eq:maxmin-ob}-\eqref{eq:maxmin-con2}.

\subsection{Case studies}
We now discuss two typical cases, and show how the minimax Lagrangian duality can be used to establish the rate duality between the interference network and its reciprocal. 

\subsubsection{The network with the per-link power constraints and without interlink interference} 
Here the set $S=L$,  and $\mathbf{\Omega}_l =\mathbf{I}$ and $\mathbf{Q}_l=\frac{\mathbf{I}}{P_l}$, with $P_l$ the power budget at each link $l\in L$.  As each link is independent, we can just focus on one link:
\begin{eqnarray}
\max_{\mathbf{\Sigma}_l\succeq 0}\min_{\mathbf{\Omega}_l\succeq 0}&&\hspace{-3mm}  \mbox{log}|\mathbf{\Omega}_l+ \mathbf{H}_{ll}\mathbf{\Sigma}_l \mathbf{H}_{ll}^+|- \mbox{log}|\mathbf{\Omega}_l| \label{eq:maxmin-obi}\\
\mbox{s.t.}&&\hspace{-3mm}\trace{\frac{\mathbf{\Sigma}_l}{P_l}} \leq 1,~~\mathbf{\Omega}_l=\mathbf{I}.\label{eq:maxmin-con1i}
\end{eqnarray}
The first order condition (\ref{eq:kkt1})-(\ref{eq:kkt2}) reduces to
\begin{eqnarray}
\nonumber w_l \mathbf{H}_{ll}^{+} \left(\mathbf{\Omega}_l + \mathbf{H}_{ll}\mathbf{\Sigma}_{l}\mathbf{H}_{ll}^{+}\right)^{-1} \mathbf{H}_{ll}= \mu_l \frac{\mathbf{I}}{P_l},\label{eq:kkt1i}\\
\nonumber w_l \left(\mathbf{\Omega}_l^{-1}-\left(\mathbf{\Omega}_l + \mathbf{H}_{ll}\mathbf{\Sigma}_{l}\mathbf{H}_{ll}^{+}\right)^{-1}\right)  =\mathbf{\Lambda}_l,\label{eq:kkt2i}
\end{eqnarray}
where $\mu_l\geq 0$ is the dual variable associated with the power constraint. Define
\begin{eqnarray}
\nonumber \hat{\mathbf{\Sigma}}_l&=& \frac{P_l}{\mu_l} \mathbf{\Lambda}_l,\label{eq:dr-1i}\\
\nonumber \hat{\mathbf{\Omega}}_l&=& \mathbf{I}.\label{eq:dr-2i}
\end{eqnarray}
The first order condition becomes
\begin{eqnarray}
w_l \mathbf{H}_{ll}^{+} \left(\mathbf{\Omega}_l + \mathbf{H}_{ll}\mathbf{\Sigma}_{l}\mathbf{H}_{ll}^{+}\right)^{-1} \mathbf{H}_{ll}=\frac{\mu_l}{P_l} \hat{\mathbf{\Omega}}_{l},\label{eq:kkt1ei}\\
w_l \left(\mathbf{\Omega}_l^{-1}-\left(\mathbf{\Omega}_l + \mathbf{H}_{ll}\mathbf{\Sigma}_{l}\mathbf{H}_{ll}^{+}\right)^{-1}\right)  =\frac{\mu_l}{P_l}\hat{\mathbf{\Sigma}}_l,\label{eq:kkt2ei}
\end{eqnarray}
and the explicit solution (\ref{eq:es1})-(\ref{eq:es2}) becomes  
\begin{eqnarray}
w_l \mathbf{H}_{ll} \left(\hat{\mathbf{\Omega}}_l + \mathbf{H}_{ll}^+\hat{\mathbf{\Sigma}}_{l}\mathbf{H}_{ll}\right)^{-1} \mathbf{H}_{ll}^+ = \frac{\mu_l}{P_l} \mathbf{\Omega}_{l},\label{eq:es1ti}\\
w_l \left(\hat{\mathbf{\Omega}}_l^{-1}-\left(\hat{\mathbf{\Omega}}_l + \mathbf{H}_{ll}^+\hat{\mathbf{\Sigma}}_{l}\mathbf{H}_{ll}\right)^{-1}\right)  =\frac{\mu_l}{P_l} \mathbf{\Sigma}_l.\label{eq:es2ti}
\end{eqnarray}

Compare equations (\ref{eq:kkt1ei})-(\ref{eq:kkt2ei}) and equations (\ref{eq:es1ti})-(\ref{eq:es2ti}), we can conclude that the Lagrangian dual of the max-min problem (\ref{eq:maxmin-obi})-(\ref{eq:maxmin-con1i}) is also a max-min problem:
\begin{eqnarray}
\max_{\hat{\mathbf{\Sigma}}_l\succeq 0}\min_{\hat{\mathbf{\Omega}}_l\succeq 0}&&\hspace{-3mm}  \mbox{log}|\hat{\mathbf{\Omega}}_l+ \mathbf{H}_{ll}^+\hat{\mathbf{\Sigma}}_l \mathbf{H}_{ll}|- \mbox{log}|\hat{\mathbf{\Omega}}_l| \label{eq:dmaxmin-obi}\\
\mbox{s.t.}&&\hspace{-3mm}\trace{\frac{\hat{\mathbf{\Sigma}_l}}{P_l}} \leq 1,~~\hat{\mathbf{\Omega}}_l=\mathbf{I},\label{eq:dmaxmin-con1i}
\end{eqnarray}
which is the sum-rate maximization problem defined on the reciprocal link with channel matrix $\mathbf{H}^+_l$. At the corresponding saddle points, the two problems achieve the same rate, since one is the dual of the other. Furthermore, introducing the dual variables $\hat{\mu}$ and $\hat{\mathbf{\Lambda}}_l$  for the problem (\ref{eq:dmaxmin-obi})-(\ref{eq:dmaxmin-con1i}), we have the following correspondence:
\begin{eqnarray}
(\mathbf{\Sigma}_l; \mathbf{\Lambda}_l, \mu_l) = (\frac{P_l}{\hat{\mu}_l}\hat{\mathbf{\Lambda}}_l ; \frac{\hat{\mu}_l}{P_l}\hat{\mathbf{\Sigma}}_l, \hat{\mu}_l),\label{eq:corres1i} \\
(\hat{\mathbf{\Sigma}}_l; \hat{\mathbf{\Lambda}}_l, \hat{\mu}_l) = (\frac{P_l}{\mu_l} \mathbf{\Lambda}_l; \frac{\mu_l}{P_l} \mathbf{\Sigma}_l, \mu_l).\label{eq:corres2i}
\end{eqnarray}

This recovers the well-known result in \cite{Viswanath-2003-IT,VJG-2003-IT,Yu-2006-IT}. The difference from \cite{Yu-2006-IT} is that we establish the explicit solution (\ref{eq:es1ti})-(\ref{eq:es2ti}) and the correspondence (\ref{eq:corres1i})-(\ref{eq:corres2i}) for general power covariance matrices and channel matrices and at any saddle points of the Lagrangian function (instead of only at an optimum). 

\subsubsection{The network with the total power constraint}\label{sect:md-t}
Here $|S|=1$ and $\mathbf{Q}_l=\frac{\mathbf{I}}{P_T}$, with $P_T$ the total power budget. The max-min problem (\ref{eq:maxmin-ob})-(\ref{eq:maxmin-con2}) reduces to 
\begin{eqnarray}
\max_{\mathbf{\Sigma}_l\succeq 0}\min_{\mathbf{\Omega}_l\succeq 0}&&\hspace{-3mm} \sum_l w_l \left( \mbox{log}|\mathbf{\Omega}_l+ \mathbf{H}_{ll}\mathbf{\Sigma}_l \mathbf{H}_{ll}^+|- \mbox{log}|\mathbf{\Omega}_l| \right) \label{eq:maxmin-obt}\\
\mbox{s.t.}&&\hspace{-3mm}\sum_l \trace{\frac{\mathbf{\Sigma}_l}{P_T}} \leq 1,\label{eq:maxmin-con1t}\\
&&\hspace{-3mm} \mathbf{\Omega}_l=\mathbf{I}+\sum_{k\in L\backslash \{l\}} \mathbf{H}_{kl}\mathbf{\Sigma}_k \mathbf{H}_{lk}^+,\label{eq:maxmin-con2t}
\end{eqnarray}
and the first order condition (\ref{eq:kkt1})-(\ref{eq:kkt2}) reduces to
\begin{eqnarray}
\nonumber w_l \mathbf{H}_{ll}^{+} \left(\mathbf{\Omega}_l + \mathbf{H}_{ll}\mathbf{\Sigma}_{l}\mathbf{H}_{ll}^{+}\right)^{-1} \mathbf{H}_{ll}= \mathbf{\Phi}_l,\label{eq:kkt1t}\\
\nonumber w_l \left(\mathbf{\Omega}_l^{-1}-\left(\mathbf{\Omega}_l + \mathbf{H}_{ll}\mathbf{\Sigma}_{l}\mathbf{H}_{ll}^{+}\right)^{-1}\right)  =\mathbf{\Lambda}_l,\label{eq:kkt2t}
\end{eqnarray}
with
$\mathbf{\Phi}_l = \mu \frac{\mathbf{I}}{P_l} + \hspace{-2mm}\sum_{k\in L\backslash \{l\}} \mathbf{H}_{kl}^+ \mathbf{\Lambda}_k \mathbf{H}_{kl}$, 
where $\mu\geq 0$ is the dual variable associated with the total power constraint. Define
\begin{eqnarray}
\nonumber \hat{\mathbf{\Sigma}}_l&=& \frac{P_T}{\mu} \mathbf{\Lambda}_l,\label{eq:dr-1}\\
\nonumber \hat{\mathbf{\Omega}}_l&=& \mathbf{I}+\sum_{k\in L\backslash \{l\}} \mathbf{H}_{kl}^+\hat{\mathbf{\Sigma}}_k \mathbf{H}_{kl}.\label{eq:dr-2}
\end{eqnarray}
The first order condition becomes
\begin{eqnarray}
w_l \mathbf{H}_{ll}^{+} \left(\mathbf{\Omega}_l + \mathbf{H}_{ll}\mathbf{\Sigma}_{l}\mathbf{H}_{ll}^{+}\right)^{-1} \mathbf{H}_{ll}=\frac{\mu}{P_T} \hat{\mathbf{\Omega}}_{l},\label{eq:kkt1e}\\
w_l \left(\mathbf{\Omega}_l^{-1}-\left(\mathbf{\Omega}_l + \mathbf{H}_{ll}\mathbf{\Sigma}_{l}\mathbf{H}_{ll}^{+}\right)^{-1}\right)  =\frac{\mu}{P_T}\hat{\mathbf{\Sigma}}_l,\label{eq:kkt2e}
\end{eqnarray}
and the explicit solution (\ref{eq:es1})-(\ref{eq:es2}) becomes  
\begin{eqnarray}
w_l \mathbf{H}_{ll} \left(\hat{\mathbf{\Omega}}_l + \mathbf{H}_{ll}^+\hat{\mathbf{\Sigma}}_{l}\mathbf{H}_{ll}\right)^{-1} \mathbf{H}_{ll}^+ = \frac{\mu}{P_T} \mathbf{\Omega}_{l},\label{eq:es1t}\\
w_l \left(\hat{\mathbf{\Omega}}_l^{-1}-\left(\hat{\mathbf{\Omega}}_l + \mathbf{H}_{ll}^+\hat{\mathbf{\Sigma}}_{l}\mathbf{H}_{ll}\right)^{-1}\right)  =\frac{\mu}{P_T} \mathbf{\Sigma}_l.\label{eq:es2t}
\end{eqnarray}

Compare equations (\ref{eq:kkt1e})-(\ref{eq:kkt2e}) and equations (\ref{eq:es1t})-(\ref{eq:es2t}), we can conclude that the Lagrangian dual of the max-min problem (\ref{eq:maxmin-obt})-(\ref{eq:maxmin-con2t}) is also a max-min problem:
\begin{eqnarray}
\max_{\hat{\mathbf{\Sigma}}_l\succeq 0}\min_{\hat{\mathbf{\Omega}}_l\succeq 0}&&\hspace{-3mm} \sum_l w_l \left( \mbox{log}|\hat{\mathbf{\Omega}}_l+ \mathbf{H}_{ll}^+\hat{\mathbf{\Sigma}}_l \mathbf{H}_{ll}|- \mbox{log}|\hat{\mathbf{\Omega}}_l| \right) \label{eq:dmaxmin-obt}\\
\mbox{s.t.}&&\hspace{-3mm}\sum_l \trace{\frac{\hat{\mathbf{\Sigma}}_l}{P_T}} \leq 1,\label{eq:dmaxmin-con1t}\\
&&\hspace{-3mm} \hat{\mathbf{\Omega}}_l=\mathbf{I}+\sum_{k\in L\backslash \{l\}} \mathbf{H}_{kl}^+\hat{\mathbf{\Sigma}}_k \mathbf{H}_{lk},\label{eq:dmaxmin-con2t}
\end{eqnarray}
which is the weighted sum-rate maximization problem defined on a network of reciprocal channels with channel matrix $\mathbf{H}^+$. At the corresponding saddle points, the two problems achieve the same weighted sum-rate, since one is the dual of the other. Furthermore, introducing the dual variables $\hat{\mu}$ and $\hat{\mathbf{\Lambda}}_l$  for the problem (\ref{eq:dmaxmin-obt})-(\ref{eq:dmaxmin-con2t}), we have the following correspondence:
\begin{eqnarray}
(\mathbf{\Sigma}_l; \mathbf{\Lambda}_l, \mu) = (\frac{P_T}{\hat{\mu}}\hat{\mathbf{\Lambda}}_l ; \frac{\hat{\mu}}{P_T}\hat{\mathbf{\Sigma}}_l, \hat{\mu}), \\
(\hat{\mathbf{\Sigma}}_l; \hat{\mathbf{\Lambda}}_l, \hat{\mu}) = (\frac{P_T}{\mu} \mathbf{\Lambda}_l; \frac{\mu}{P_T} \mathbf{\Sigma}_l, \mu).
\end{eqnarray}
This provides a simple proof of the weighted sum-rate duality identified in, e.g., \cite{liu2010duality}.


\section{The iterative minimax algorithm}\label{sect:ima}
Motivated by the minimax Lagrangian duality, in this section we will design a novel algorithm for the weighted sum-rate maximization and establish its convergence properties. Our algorithm applies/extends to any max-min problems where the objective function is concave in the maximizing variables and convex in the minimizing variables and the constraints are convex, and thus provides a general class of algorithms for such optimization problems.

\subsection{The iterative minimax algorithm}\label{sect:alg}

Note that the optimum of the max-min problem (\ref{eq:maxmin-ob})-(\ref{eq:maxmin-con2}) is a saddle point, and the first order condition (\ref{eq:kkt1})-(\ref{eq:kkt2}) or part of it will give a saddle point, maximum or minimum of Lagrangian $\mathcal{L}$ when certain subset of its variables is fixed and given. This motivates an iterative minimax algorithm to achieve an optimum, as follows. 

\begin{enumerate}
\item Start with given $\mathbf{\Sigma}_l^n,~l\in L$ that is feasible, i.e., 
\begin{eqnarray*}
\sum_{l\in L^s}\trace{\mathbf{\Sigma}_l^n\mathbf{Q}_l^s}\leq 1,~s\in S,
\end{eqnarray*} 
and $\mathbf{\Omega}_l^n=\mathbf{I}+\sum_{k\in L\backslash \{l\}} \mathbf{H}_{lk}\mathbf{\Sigma}_k^n \mathbf{H}_{lk}^+,~l\in L$. By equation (\ref{eq:kkt2}) that gives the condition for minimizing $\mathcal{L}$ over $\mathbf{\Omega}_l$, we choose $\mathbf{\Lambda}_l^n\succeq 0$ such that 
\begin{eqnarray}
\mathbf{\Lambda}_l^n=w_l\left((\mathbf{\Omega}_l^n)^{-1}-\left(\mathbf{\Omega}_l^n + \mathbf{H}_{ll}\mathbf{\Sigma}_{l}^n\mathbf{H}_{ll}^{+}\right)^{-1}\right).\label{eq:alg1a}
\end{eqnarray}
Therefore, for any $\mathbf{\Omega}\succeq 0$, we have
\begin{eqnarray}
 \nonumber \mathcal{F}(\mathbf{\Sigma}^n, \mathbf{\Omega}^n)&\leq&\mathcal{L}(\mathbf{\Sigma}^n, \mathbf{\Omega}^n, \mathbf{\Lambda}^n, \mathbf{\mu}^n) \\
 &\leq& \mathcal{L} (\mathbf{\Sigma}^n, \mathbf{\Omega}, \mathbf{\Lambda}^n, \mathbf{\mu}^n), \label{eq:ie-min}
 \end{eqnarray}
 where $\mathbf{\mu}^n\succeq 0$ will be determined later. Define
 \begin{eqnarray}
 \mathbf{\Phi}_l^n=\sum_{s\in S^l}\mu_s^n \mathbf{Q}_l^s+\sum_{k\in L\backslash \{l\}} \mathbf{H}_{kl}^+\mathbf{\Lambda}_k^n \mathbf{H}_{kl}.\label{eq:alg1b}
 \end{eqnarray}
 Note that $ \mathbf{\Phi}_l^n$ does not necessary satisfy equation (\ref{eq:kkt1}).
 
 \item Given the above $(\mathbf{\Lambda}_l^n, \mathbf{\Phi}_l^n)$ and $\mathbf{\mu}^n$, by equations (\ref{eq:es1})-(\ref{eq:es2}), we choose $(\tilde{\mathbf{\Sigma}}^{n+1}, \tilde{\mathbf{\Omega}}^{n+1})$ such that 
 \begin{eqnarray}
\tilde{\mathbf{\Sigma}}_l^{n+1}=w_l\left((\mathbf{\Phi}_l^n)^{-1}-\left(\mathbf{\Phi}_l ^n+ \mathbf{H}_{ll}\mathbf{\Lambda}_{l}^n\mathbf{H}_{ll}^{+}\right)^{-1}\right). \label{eq:alg2a}\\
\tilde{\mathbf{\Omega}}_l^{n+1}=w_l \mb{H}_{ll}\left( \mathbf{\Phi}_l ^n+ \mathbf{H}_{ll}\mathbf{\Lambda}_{l}^n\mathbf{H}_{ll}^{+}\right)^{-1}\mb{H}_{ll}^+.\label{eq:alg2b}
 \end{eqnarray}
Plug $\mathbf{\Omega}=\tilde{\mathbf{\Omega}}^{n+1}$ into inequality (\ref{eq:ie-min}), we have
\begin{eqnarray}
\mathcal{F}(\mathbf{\Sigma}^n, \mathbf{\Omega}^n) \leq \mathcal{L}(\mathbf{\Sigma}^n, \tilde{\mathbf{\Omega}}^{n+1}, \mathbf{\Lambda}^n, \mb{\mu}^n) . \label{eq:ie-mins}
 \end{eqnarray}
By the first order condition (\ref{eq:kkt1})-(\ref{eq:kkt2}), $(\tilde{\mathbf{\Sigma}}^{n+1}, \tilde{\mathbf{\Omega}}^{n+1})$ is the saddle point of $\mathcal{L}(\mathbf{\Sigma}, \mathbf{\Omega}, \mathbf{\Lambda}^n, \mu^n)$. Thus,
\begin{eqnarray}
\nonumber \mathcal{L}(\mathbf{\Sigma}^n, \tilde{\mathbf{\Omega}}^{n+1}, \mathbf{\Lambda}^n, \mu^n)\hspace{-2mm}&\leq& \hspace{-2mm} \mathcal{L}(\tilde{\mathbf{\Sigma}}^{n+1}, \tilde{\mathbf{\Omega}}^{n+1}, \mathbf{\Lambda}^n, \mu^n) \\
\hspace{-2mm}&\leq&\hspace{-2mm} L(\tilde{\mathbf{\Sigma}}^{n+1}, \mathbf{\Omega}, \mathbf{\Lambda}^n, \mu^n)
\label{eq:ie-max}
\end{eqnarray}
for any $\mathbf{\Omega}\succeq 0$.
 
\item The matrix $\tilde{\mathbf{\Sigma}}^{n+1}_l$ is a function of $\mu_s^n,~s\in S^l$, denoted explicitly by $\tilde{\mathbf{\Sigma}}^{n+1}_l (\mu_s^n; s\in S^l)$. Define the set $T$ such that
 \begin{eqnarray*}
 T=\{s\in S| \sum_{l\in L^s}\trace{\mb{Q}_l^s\tilde{\mathbf{\Sigma}}^{n+1}_l (\mu_{\bar{s}}^n=0^+; \bar{s}\in S^l)}\geq 1\}.
 \end{eqnarray*}
For each $s\in S\backslash T$, we set $\mu_s^n=0$. For those $s\in T$, we choose $\mu_s^n$ such that  
\begin{eqnarray}
\sum_{l\in L^s}\trace{\mb{Q}_l^s\tilde{\mathbf{\Sigma}}^{n+1}_l (\mu_{\bar{s}}^n; \bar{s}\in S^l)}= 1,~s\in T.\label{eq:alg3a}
\end{eqnarray}
Note that $\trace{\tilde{\mathbf{\Sigma}}^{n+1}_l (\mu_s^n; s\in S^l)}$ is decreasing in $\mu_s^n$, and there are $|T|$ equations for $|T|$ variables. So, there exists a solution to equation (\ref{eq:alg3a}). With the afore choice of $\mu_s^n,~s\in S$, we can see that
\begin{eqnarray}
\mu_s^n \left(1-\sum_{l\in L^s}\trace{\tilde{\mathbf{\Sigma}}^{n+1}_l\mb{Q}_l^s}\right)=0.\label{eq:alg3b}
\end{eqnarray}

The above is a complementary slackness condition (part of the KKT condition) that is required at an optimum~\cite{Boyd}, but in our algorithm we enforce this condition at each iteration.
  
 \item Let 
 \begin{eqnarray*}
 \lambda=\max_{s\in S} \sum_{l\in L^s}\trace{\tilde{\mathbf{\Sigma}}^{n+1}_l \mb{Q}_l^s}.
 \end{eqnarray*}
We see that $0<\lambda\leq 1$. We then choose $(\mathbf{\Sigma}_l^{n+1}, \mathbf{\Omega}_l^{n+1})$ such that
 \begin{eqnarray}
\mathbf{\Sigma}_l^{n+1}=\frac{\tilde{\mathbf{\Sigma}}_l^{n+1}}{\lambda},\label{eq:alg4a}\\
\mathbf{\Omega}_l^{n+1}=\mathbf{I}+\sum_{k\in L\backslash \{l\}} \mathbf{H}_{lk}\mathbf{\Sigma}_k^{n+1} \mathbf{H}_{lk}^+.\label{eq:alg4b}
\end{eqnarray}
Plug $\tilde{\mathbf{\Sigma}}^{n+1}=\lambda \mathbf{\Sigma}^{n+1}$ and  $\mathbf{\Omega}=\lambda {\mathbf{\Omega}}_l^{n+1}$ into the inequality (\ref{eq:ie-max}) and combine with the inequality (\ref{eq:ie-mins}), we have 
\begin{eqnarray}
\nonumber \mathcal{F} (\mathbf{\Sigma}^n, \mathbf{\Omega}^n)\hspace{-1mm}&\leq&\hspace{-1mm} \mathcal{L}(\lambda \mathbf{\Sigma}^{n+1}, \lambda\mathbf{\Omega}^{n+1}, \mathbf{\Lambda}^n, \mb{\mu}^n) \\
\nonumber \hspace{-1mm}&=& \hspace{-1mm}\mathcal{F}( \mathbf{\Sigma}^{n+1}, \mathbf{\Omega}^{n+1})+\sum_{l\in L}(\lambda -1) \trace{\mathbf{\Lambda}_l^{n}}\\
\nonumber \hspace{-1mm}&& \hspace{-1mm}+\sum_{s\in S}\mu_s^n \left(1-\sum_{l\in L^s}\trace{\tilde{\mathbf{\Sigma}}_l^{n+1}\mathbf{Q}_l^s}\right) \\
\nonumber \hspace{-1mm}&=& \hspace{-1mm}\mathcal{F}( \mathbf{\Sigma}^{n+1}, \mathbf{\Omega}^{n+1}) +\sum_{l\in L}(\lambda -1) \trace{\mathbf{\Lambda}_l^{n}}\\
\hspace{-1mm}&\leq& \hspace{-1mm}\mathcal{F}( \mathbf{\Sigma}^{n+1}, \mathbf{\Omega}^{n+1}) ,\label{eq:mi}
 \end{eqnarray}
 where the second equality follows from equation (\ref{eq:alg3b}) and the last inequality follows from the fact that $\lambda\leq 1$.
 
\item Repeat 1)-4), we obtain a monotone increasing sequence $\{\mathcal{F}(\mathbf{\Sigma}^n, \mathbf{\Omega}^n)\}$, based on which we can conclude that $(\mathbf{\Sigma}^n, \mathbf{\Omega}^n)$ converges to a saddle point of the max-min problem (\ref{eq:maxmin-ob})-(\ref{eq:maxmin-con2}) and thus an (local) optimum of the weighted sum-rate maximization (\ref{eq:srmo})-(\ref{eq:srmc}).
\end{enumerate}

We call the above algorithm the {\em iterative minimax algorithm}; see Table \ref{tb:alg} for a formal description. 

\begin{table}[htb]
\caption{The Iterative Minimax Algorithm}
\label{tb:alg}
\begin{center}
{\ttfamily
\begin{tabular}{c}\hline\hline
1)\hspace{3mm}Initialize $\mathbf{\Sigma}_l,~l\in L $ such that~~~~~~~~~~~~~~~~~\\
~~~~~$\sum_{l\in L^s}\trace{\mathbf{\Sigma}_l \mathbf{Q}_l^s}\leq 1,~s\in S$~~~~~~~~~~~~\\
2)\hspace{3mm}$\mathbf{\Omega}_l \leftarrow \mathbf{I}+\sum_{k\in L\backslash \{l\}} \mathbf{H}_{lk}\mathbf{\Sigma}_k \mathbf{H}_{lk}^+,~l\in L$~~~~~~~~~~~~~~~~\\
3)\hspace{3mm}$\mathbf{\Lambda}_l \leftarrow w_l\left((\mathbf{\Omega}_l)^{-1}-\left(\mathbf{\Omega}_l + \mathbf{H}_{ll}\mathbf{\Sigma}_{l}\mathbf{H}_{ll}^{+}\right)^{-1}\right),~l\in L$~~~~~~~\\
4)\hspace{3mm}$\mathbf{\Phi}_l \leftarrow \sum_{s\in S^l}\mu_s \mathbf{Q}_l^s+\sum_{k\in L\backslash \{l\}} \mathbf{H}_{kl}^+\mathbf{\Lambda}_k \mathbf{H}_{kl},~l\in L$~~~~~~~\\
5)\hspace{3mm}$\tilde{\mathbf{\Sigma}}_l \leftarrow w_l\left((\mathbf{\Phi}_l)^{-1}-\left(\mathbf{\Phi}_l + \mathbf{H}_{ll}\mathbf{\Lambda}_{l}\mathbf{H}_{ll}^{+}\right)^{-1}\right),~l\in L$~~~~~~~\\
6)\hspace{3mm}$T \leftarrow \{s\in S| \sum_{l\in L^s}\trace{\mb{Q}_l^s\tilde{\mathbf{\Sigma}}_l (\mu_{\bar{s}}=0^+; \bar{s}\in S^l)}\geq 1\}$~~~~\\
7)\hspace{3mm}$\mu_s \leftarrow 0$ if $s\in S \backslash T$~~~~~~~~~~~~~~~~~~~~~~~~~~~~~~\\
8)\hspace{3mm}For $s\in T$, choose $\mu_s$ such that~~~~~~~~~~~~~~~~~\\
~~~~~~$\sum_{l\in L^s}\trace{\mb{Q}_l^s\tilde{\mathbf{\Sigma}}_l (\mu_{\bar{s}}; \bar{s}\in S^l)}= 1,~s\in T$~~~\\
9)\hspace{3mm}$\lambda \leftarrow \max_{s\in S} \sum_{l\in L^s}\trace{\tilde{\mathbf{\Sigma}}_l \mb{Q}_l^s}$~~~~~~~~~~~~~~~~~~~~~\\
10)\hspace{2mm}$\mathbf{\Sigma}_l \leftarrow \frac{\tilde{\mathbf{\Sigma}}_l}{\lambda},~l\in L$~~~~~~~~~~~~~~~~~~~~~~~~~~~~~~~~~\\
11)\hspace{2mm}Go to 2)~~~~~~~~~~~~~~~~~~~~~~~~~~~~~~~~~~~~~~\\
\hline\hline
\end{tabular}
}
\end{center}
\end{table}

\subsection{The convergence analysis}

We now study the convergence properties of the iterative minimax algorithm. The following result is immediate.
\begin{lemma}\label{thm:lp}
Under the iterative minimax algorithm, the sequence $\{C^n =\mathcal{F}(\mathbf{\Sigma}^n, \mathbf{\Omega}^n)\}$ converges to a limit point $C^*$.
\end{lemma}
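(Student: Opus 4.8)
The plan is to show that the sequence $\{C^n = \mathcal{F}(\mathbf{\Sigma}^n, \mathbf{\Omega}^n)\}$ is monotone nondecreasing and bounded above, and then invoke the monotone convergence theorem for real sequences. The monotonicity is essentially already derived in the construction of the algorithm: chaining the inequalities \eqref{eq:ie-mins}, \eqref{eq:ie-max}, and the computation culminating in \eqref{eq:mi}, we obtain $C^n = \mathcal{F}(\mathbf{\Sigma}^n, \mathbf{\Omega}^n) \leq \mathcal{F}(\mathbf{\Sigma}^{n+1}, \mathbf{\Omega}^{n+1}) = C^{n+1}$ for every $n$. So the only remaining work is to establish an upper bound.

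For boundedness above, I would argue as follows. At each iteration, $\mathbf{\Sigma}^{n+1}_l$ is feasible for the power constraints \eqref{eq:srmc} by construction (step 4 of the algorithm normalizes by $\lambda = \max_{s\in S}\sum_{l\in L^s}\trace{\tilde{\mathbf{\Sigma}}^{n+1}_l \mathbf{Q}_l^s}$, so that after normalization every group constraint holds with value at most $1$), and $\mathbf{\Omega}^{n+1}_l$ is defined by \eqref{eq:alg4b} to be exactly the interference-plus-noise covariance matrix induced by $\mathbf{\Sigma}^{n+1}$, hence feasible for \eqref{eq:maxmin-con2}. Therefore $(\mathbf{\Sigma}^{n+1}, \mathbf{\Omega}^{n+1})$ is a feasible point of the max-min problem \eqref{eq:maxmin-ob}–\eqref{eq:maxmin-con2}, and at any such feasible point the objective equals the original weighted sum-rate $\sum_{l\in L} w_l R_l$ with $R_l$ given by \eqref{eq:ir}. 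Since each $\mathbf{Q}_l^s \succ 0$, the feasible set for $\mathbf{\Sigma}$ is compact, so each $R_l$ is bounded (the rate of a link with bounded transmit covariance and noise floor $\mathbf{I}$), and hence $C^n \leq \max \sum_{l} w_l R_l < \infty$, where the max is over the compact feasible set. Combined with monotonicity, $\{C^n\}$ converges to some finite limit $C^*$.

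The step I expect to require the most care is verifying that the algorithm is well-defined at every iteration, so that $C^n$ is actually a well-defined real number — in particular, that the $\mu_s^n$ chosen in step 8 (equation \eqref{eq:alg3a}) exist, and that $\lambda \in (0,1]$ as claimed in step 9, so the normalization in step 10 makes sense. Existence of $\mu_s^n$ uses the monotonicity of $\trace{\mathbf{Q}_l^s \tilde{\mathbf{\Sigma}}^{n+1}_l(\mu)}$ in each $\mu_s$ together with a fixed-point / continuity argument over the $|T|$ coupled equations; $\lambda > 0$ holds because $\tilde{\mathbf{\Sigma}}^{n+1}_l \succeq 0$ and is not identically zero (given nonzero channels and weights), while $\lambda \leq 1$ follows from the choice of $T$ and the definition of $\mu_s^n$. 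These are the facts asserted in the algorithm description, so for the purposes of this lemma I would treat them as established and keep the proof of Lemma~\ref{thm:lp} to the two lines of monotonicity-plus-boundedness; the phrase ``The following result is immediate'' signals that this is the intended level of detail.
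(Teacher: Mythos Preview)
Your proposal is correct and follows exactly the paper's own argument: monotonicity from inequality \eqref{eq:mi} together with boundedness from continuity of $\mathcal{F}$ on the compact feasible set \eqref{eq:maxmin-con1}--\eqref{eq:maxmin-con2}, then the monotone convergence theorem. The paper's proof is just the two-line version you anticipated; your additional remarks on feasibility and well-definedness are sound elaborations but not part of the paper's (terse) proof.
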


\begin{proof}
Since $\mathcal{F}(\mathbf{\Sigma}, \mathbf{\Omega})$ is a continuous function and its domain (specified by the constraints (\ref{eq:maxmin-con1})-(\ref{eq:maxmin-con2})) is a compact set, $C^n$ is bounded above. By inequality (\ref{eq:mi}), the sequence $\{R^n\}$ is a monotone increasing sequence. Therefore, there exists a limit point $C^*$ such that $\lim_{n\to\infty} C^n =C^*$.
\end{proof}

\begin{theorem}
The iterative minimax algorithm converges to a saddle point $(\mb{\Sigma}^*, \mb{\Omega}^*)$ of the max-min problem (\ref{eq:maxmin-ob})-(\ref{eq:maxmin-con2}); and $\mb{\Sigma}^*$ is an optimum of the weighted sum-rate maximization (\ref{eq:srmo})-(\ref{eq:srmc}).
\end{theorem}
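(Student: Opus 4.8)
\emph{Proof plan.} The idea is to combine the monotonicity \eqref{eq:mi} and Lemma~\ref{thm:lp} with a compactness argument, and then pass the one-step update relations of the algorithm to a limit so as to recover the full first-order/KKT system \eqref{eq:kkt1}--\eqref{eq:kkt2} together with primal feasibility \eqref{eq:maxmin-con1}--\eqref{eq:maxmin-con2}, dual feasibility $\mathbf{\mu}\ge 0,\ \mathbf{\Lambda}_l\succeq 0$, and the complementary slackness \eqref{eq:alg3b}. Since $\mathcal{F}$ is concave in $\mathbf{\Sigma}$, convex in $\mathbf{\Omega}$, and the constraints are convex, a point satisfying this system is a saddle point of the max--min problem, and substituting \eqref{eq:maxmin-con2} identifies its $\mathbf{\Sigma}$-component as an optimum of \eqref{eq:srmo}--\eqref{eq:srmc}.

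First I would fix a convergent subsequence. Because $\mathbf{Q}_l^s\succ 0$, the feasible set $\{\mathbf{\Sigma}:\sum_{l\in L^s}\trace{\mathbf{\Sigma}_l\mathbf{Q}_l^s}\le 1\}$ is compact, so $\{\mathbf{\Sigma}^n\}$ has a limit point; the remaining quantities $\mathbf{\Omega}^n,\mathbf{\Lambda}^n,\mathbf{\Phi}^n,\mathbf{\mu}^n,\tilde{\mathbf{\Sigma}}^{n+1},\tilde{\mathbf{\Omega}}^{n+1},\lambda^n$ are produced from $\mathbf{\Sigma}^n$ by the continuous maps \eqref{eq:alg1a}--\eqref{eq:alg4b} (the multipliers $\mathbf{\mu}^n$ being the unique root of the strictly monotone equation \eqref{eq:alg3a}, and $\tilde{\mathbf{\Sigma}}^{n+1}_l$ bounded since $\trace{\mathbf{Q}_l^s\tilde{\mathbf{\Sigma}}^{n+1}_l}\le 1$ for every $l$), so the whole tuple stays in a compact set. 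Along a subsequence I pass to limits $(\mathbf{\Sigma}^\star,\mathbf{\Omega}^\star,\mathbf{\Lambda}^\star,\mathbf{\Phi}^\star,\mathbf{\mu}^\star,\tilde{\mathbf{\Sigma}}^\star,\tilde{\mathbf{\Omega}}^\star,\lambda^\star)$, with $(\mathbf{\Sigma}^\star,\mathbf{\Omega}^\star)$ feasible by continuity.

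Next I would squeeze the chain underlying \eqref{eq:mi}: concatenating \eqref{eq:ie-mins}, \eqref{eq:ie-max}, and the algebra in \eqref{eq:mi} gives, for all $n$, $C^n\le \mathcal{L}(\mathbf{\Sigma}^n,\tilde{\mathbf{\Omega}}^{n+1},\mathbf{\Lambda}^n,\mathbf{\mu}^n)\le \mathcal{L}(\tilde{\mathbf{\Sigma}}^{n+1},\tilde{\mathbf{\Omega}}^{n+1},\mathbf{\Lambda}^n,\mathbf{\mu}^n)\le C^{n+1}+(\lambda^n-1)\sum_{l}\trace{\mathbf{\Lambda}_l^n}\le C^{n+1}$. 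By Lemma~\ref{thm:lp} both ends converge to $C^*$, so every intermediate term does, and in particular $(1-\lambda^n)\sum_l\trace{\mathbf{\Lambda}_l^n}\to 0$. I then argue $\lambda^\star=1$: otherwise $\sum_l\trace{\mathbf{\Lambda}_l^\star}=0$, hence $\mathbf{\Lambda}^\star=0$, and passing \eqref{eq:alg1a} to the limit forces $\mathbf{H}_{ll}\mathbf{\Sigma}_l^\star\mathbf{H}_{ll}^+=0$ for all $l$, so $C^*=\mathcal{F}(\mathbf{\Sigma}^\star,\mathbf{\Omega}^\star)=0$, contradicting $C^*\ge C^0>0$ for any nondegenerate start (at $\mathbf{\Sigma}=0$ the update \eqref{eq:alg4a} is not even defined, so a nondegenerate start is the natural hypothesis). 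With $\lambda^\star=1$ in hand, passing \eqref{eq:alg1a}, \eqref{eq:alg1b}, \eqref{eq:alg3b}, \eqref{eq:alg4a}--\eqref{eq:alg4b} and Lemma~\ref{thm:es} to the limit yields: \eqref{eq:kkt2} holds at $(\mathbf{\Sigma}^\star,\mathbf{\Omega}^\star,\mathbf{\Lambda}^\star)$; the formula for $\mathbf{\Phi}^\star$; $(\tilde{\mathbf{\Sigma}}^\star,\tilde{\mathbf{\Omega}}^\star)$ solves \eqref{eq:kkt1}--\eqref{eq:kkt2} with $(\mathbf{\Phi}^\star,\mathbf{\Lambda}^\star)$; the limit of $\mathbf{\Sigma}^{n+1}$ equals $\tilde{\mathbf{\Sigma}}^\star$; complementary slackness $\mu_s^\star\big(1-\sum_{l\in L^s}\trace{\tilde{\mathbf{\Sigma}}_l^\star\mathbf{Q}_l^s}\big)=0$; and $\mathbf{\mu}^\star\ge 0$, $\mathbf{\Lambda}^\star\succeq 0$.

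What remains --- and what I expect to be the main obstacle --- is the fixed-point identification $\mathbf{\Sigma}^\star=\tilde{\mathbf{\Sigma}}^\star$ and $\mathbf{\Omega}^\star=\tilde{\mathbf{\Omega}}^\star$, which transfers \eqref{eq:kkt1} from $(\tilde{\mathbf{\Sigma}}^\star,\tilde{\mathbf{\Omega}}^\star)$ to $(\mathbf{\Sigma}^\star,\mathbf{\Omega}^\star)$. From the squeezed chain, $\mathbf{\Sigma}^\star$ attains $C^*=\max_{\mathbf{\Sigma}\succeq 0}\mathcal{L}(\mathbf{\Sigma},\tilde{\mathbf{\Omega}}^\star,\mathbf{\Lambda}^\star,\mathbf{\mu}^\star)$, a value also attained by the maximizer $\tilde{\mathbf{\Sigma}}^\star$ (Lemma~\ref{thm:es} together with concavity of $\mathcal{F}$ in $\mathbf{\Sigma}$); strict concavity of $\mathcal{F}$ in $\mathbf{\Sigma}$ then forces $\mathbf{\Sigma}^\star=\tilde{\mathbf{\Sigma}}^\star$, after which $\mathbf{\Omega}^\star$ and $\tilde{\mathbf{\Omega}}^\star$ both satisfy \eqref{eq:kkt2} with this common argument and strict convexity of $\mathcal{F}$ in $\mathbf{\Omega}$ forces $\mathbf{\Omega}^\star=\tilde{\mathbf{\Omega}}^\star$. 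Hence $(\mathbf{\Sigma}^\star,\mathbf{\Omega}^\star)$, with multipliers $(\mathbf{\Lambda}^\star,\mathbf{\mu}^\star)$, satisfies \eqref{eq:kkt1}--\eqref{eq:kkt2}, primal and dual feasibility, and complementary slackness, so it is a saddle point of \eqref{eq:maxmin-ob}--\eqref{eq:maxmin-con2}, and by \eqref{eq:maxmin-con2} the matrix $\mathbf{\Sigma}^\star$ is an optimum of \eqref{eq:srmo}--\eqref{eq:srmc} achieving value $C^*$; since the argument applies to every convergent subsequence, the bounded sequence $\{(\mathbf{\Sigma}^n,\mathbf{\Omega}^n)\}$ has only saddle-point limit points, all attaining $C^*$. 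The delicacy in the identification step is that the required strict concavity/convexity of $\mathcal{F}$ fails when $\mathbf{H}_{ll}$ or $\mathbf{H}_{ll}\mathbf{\Sigma}_l\mathbf{H}_{ll}^+$ is rank deficient, so it must be carried out on the truncated system with pseudoinverses in the spirit of the Appendix; a secondary (and milder) point is upgrading subsequential convergence to convergence of the full sequence, which is immediate when the saddle points are isolated.
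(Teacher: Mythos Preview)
Your approach is correct and rests on the same two pillars as the paper's proof: squeezing the chain of inequalities behind \eqref{eq:mi} to equalities, and then invoking strict concavity of $\mathcal{L}$ in $\mathbf{\Sigma}$ to force uniqueness of the maximizer. The paper's argument is more compressed: after Lemma~\ref{thm:lp} it simply asserts that it suffices to show ``$\mathcal{F}(\mathbf{\Sigma}^n,\mathbf{\Omega}^n)=\mathcal{F}(\mathbf{\Sigma}^{n+1},\mathbf{\Omega}^{n+1})\Rightarrow(\mathbf{\Sigma}^n,\mathbf{\Omega}^n)=(\mathbf{\Sigma}^{n+1},\mathbf{\Omega}^{n+1})$,'' verifies this by noting that equality throughout \eqref{eq:ie-min}--\eqref{eq:mi} makes both iterates satisfy the full KKT system, and then appeals to strict concavity in $\mathbf{\Sigma}$ to conclude $\mathbf{\Sigma}^n=\mathbf{\Sigma}^{n+1}$. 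In effect the paper characterizes the fixed points of the iteration map and leaves the compactness/subsequence passage implicit; it also does not isolate the $\lambda^\star=1$ step or the rank-deficiency caveat on strict concavity. Your proposal makes all of these explicit (the boundedness of the dual iterates, the continuity of the $\mu$-update via monotonicity of \eqref{eq:alg3a}, the $\lambda^\star=1$ argument, and the truncated-system interpretation), so it is a more rigorous execution of the same strategy rather than a different route.
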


\begin{proof}
With Lemma \ref{thm:lp}, to show the convergence of the iterative minimax algorithm, it is enough to show that if  $\mathcal{F}(\mathbf{\Sigma}^n, \mathbf{\Omega}^n)=\mathcal{F}(\mathbf{\Sigma}^{n+1}, \mathbf{\Omega}^{n+1})$, then  $(\mathbf{\Sigma}^n, \mathbf{\Omega}^n)=(\mathbf{\Sigma}^{n+1}, \mathbf{\Omega}^{n+1})$.

From the derivation of inequality (\ref{eq:mi}), if $\mathcal{F}(\mathbf{\Sigma}^n, \mathbf{\Omega}^n)=\mathcal{F}(\mathbf{\Sigma}^{n+1}, \mathbf{\Omega}^{n+1})$, then 
\begin{eqnarray*}
\mathcal{F}(\mathbf{\Sigma}^n, \mathbf{\Omega}^n)&=&\mathcal{L}(\mathbf{\Sigma}^n, \mathbf{\Omega}^n, \mathbf{\Lambda}^n, \mathbf{\mu}^n) \\
&=& \mathcal{L}(\mathbf{\Sigma}^n, \tilde{\mathbf{\Omega}}^{n+1}, \mathbf{\Lambda}^n, \mb{\mu}^n)\\
&=&\mathcal{L}(\tilde{\mathbf{\Sigma}}^{n+1}, \tilde{\mathbf{\Omega}}^{n+1}, \mathbf{\Lambda}^n, \mu^n)\\
&=&\mathcal{L}(\mathbf{\Sigma}^{n+1}, \mathbf{\Omega}^{n+1}, \mathbf{\Lambda}^n, \mb{\mu}^n) \\
&=&\mathcal{F}(\mathbf{\Sigma}^{n+1}, \mathbf{\Omega}^{n+1}). 
\end{eqnarray*}
It follows that both $(\mathbf{\Sigma}^n, \mathbf{\Omega}^n, \mathbf{\Lambda}^n, \mathbf{\mu}^n)$ and $(\mathbf{\Sigma}^{n+1}, \mathbf{\Omega}^{n+1}, \mathbf{\Lambda}^n, \mb{\mu}^n)$ satisfy the KKT condition (the first order condition, the primal feasibility, the dual feasibility, and the complementary slackness \cite{Boyd}) of the max-min problem (\ref{eq:maxmin-ob})-(\ref{eq:maxmin-con2}), and thus both are saddle points of the max-min problem. Furthermore, for any given dual variables, the Lagrangian $\mathcal{L}$ is strictly concave in $\mathbf{\Sigma}$. So, $\mathbf{\Sigma}^n=\mathbf{\Sigma}^{n+1}$, and $\mathbf{\Omega}^n=\mathbf{\Omega}^{n+1}$ follows. Therfore, the iterative minimax algorithm converges {\em monotonically} to a saddle point of the max-min problem (\ref{eq:maxmin-ob})-(\ref{eq:maxmin-con2}). The second part of the theorem follows from the equivalence between the max-min problem  and the weighted sum-rate maximization problem.
\end{proof}

\remark{The design and convergence proof of the iterative minimax algorithm use only general convex analysis. They apply and extend to any max-min problems where the objective function is concave in the maximizing variables and convex in the minimizing variables and the constraints are convex, and thus provide a general class of algorithms for such optimization problems.}

\remark{The iterative minimax algorithm converges fairly fast and can be implemented realtime. As each link knows its own power covariance matrix and can measure/estimate its interference-plus-noise covariance matrix, the algorithm admits a distributed implementation if used as a realtime algorithm.}

\subsection{Case studies}
We now discuss a few typical cases and the corresponding iterative minimax algorithms.

\subsubsection{The network with the total power constraint}
As mentioned in Section \ref{sect:md-t}, here $|S|=1$ and $\mathbf{Q}_l=\frac{\mathbf{I}}{P_T}$, with $P_T$ the total power budget. The matrix $\tilde{\mathbf{\Sigma}}_l$ defined in Section \ref{sect:alg}  is a function of $\mu$, the dual variable associated with the total power constraint. The iterative minimax algorithm reduces to that described in Table \ref{tb:algt}.

\begin{table}[htb]
\caption{The Iterative Minimax Algorithm for the Network with the Total Power Constraint}
\label{tb:algt}
\begin{center}
{\ttfamily
\begin{tabular}{c}\hline\hline
1)\hspace{3mm}Initialize $\mathbf{\Sigma}_l,~l\in L $ such that $\sum_{l\in L}\trace{\frac{\mathbf{\Sigma}_l}{P_T}}\leq 1$~\\
2)\hspace{3mm}$\mathbf{\Omega}_l \leftarrow \mathbf{I}+\sum_{k\in L\backslash \{l\}} \mathbf{H}_{lk}\mathbf{\Sigma}_k \mathbf{H}_{lk}^+,~l\in L$~~~~~~~~~~~~~~~~\\
3)\hspace{3mm}$\mathbf{\Lambda}_l \leftarrow w_l\left((\mathbf{\Omega}_l)^{-1}-\left(\mathbf{\Omega}_l + \mathbf{H}_{ll}\mathbf{\Sigma}_{l}\mathbf{H}_{ll}^{+}\right)^{-1}\right),~l\in L$~~~~~~~\\
4)\hspace{3mm}$\mathbf{\Phi}_l \leftarrow \mu \mathbf{I}+\sum_{k\in L\backslash \{l\}} \mathbf{H}_{kl}^+\mathbf{\Lambda}_k \mathbf{H}_{kl},~l\in L$~~~~~~~~~~~~~~~\\
5)\hspace{3mm}$\tilde{\mathbf{\Sigma}}_l \leftarrow w_l\left((\mathbf{\Phi}_l)^{-1}-\left(\mathbf{\Phi}_l + \mathbf{H}_{ll}\mathbf{\Lambda}_{l}\mathbf{H}_{ll}^{+}\right)^{-1}\right),~l\in L$~~~~~~~\\
6)\hspace{3mm}$\mu \leftarrow 0$ if $\sum_{l\in L}\trace{\frac{\mb{\tilde{\mathbf{\Sigma}}}_l (0^+)}{P_T}}< 1$;otherwise choose $\mu$\\
such that $\sum_{l\in L}\trace{\frac{\tilde{\mathbf{\Sigma}}_l (\mu)}{P_T}}= 1$~~~~~~~\\
7)\hspace{3mm}$\lambda \leftarrow \sum_{l\in L}\trace{\frac{\tilde{\mathbf{\Sigma}}_l}{P_T}}$~~~~~~~~~~~~~~~~~~~~~~~~~~~~~~~\\
8)\hspace{3mm}$\mathbf{\Sigma}_l \leftarrow \frac{\tilde{\mathbf{\Sigma}}_l}{\lambda},~l\in L$~~~~~~~~~~~~~~~~~~~~~~~~~~~~~~~~~~\\
9)\hspace{3mm}Go to 2)~~~~~~~~~~~~~~~~~~~~~~~~~~~~~~~~~~~~~~~\\
\hline\hline
\end{tabular}
}
\end{center}
\end{table}

The above algorithm is different from the algorithm proposed in the previous work \cite{Li-2014-Infocom}. The algorithm in \cite{Li-2014-Infocom} uses the fact that the total power constraint is tight at an optimum, and normalizes $\mu$ such that $\sum_{l\in L}\trace{\frac{\mu}{P_T}\tilde{\mathbf{\Sigma}}_l}=1$, i.e., the algorithm enforces the tightness of the total power constraint at the initial point and each iteration. In contrast, our algorithm enforces the complementary slackness condition at each iteration and can start with any feasible $\mathbf{\Sigma}$.

\subsubsection{The network with the per-link power constraints}
Here the set $S=L$ and  $\mathbf{Q}_l=\frac{\mathbf{I}}{P_l}$, with $P_l$ the power budget at each link $l\in L$. The matrix $\tilde{\mathbf{\Sigma}}_l$ defined in Section \ref{sect:alg}  is a function of $\mu_l$, the dual variable associated with the power constraint at link $l$. The iterative minimax algorithm reduces to that described in Table \ref{tb:algp}.

\begin{table}[htb]
\caption{The Iterative Minimax Algorithm for the Network with the Per-Link Power Constraints}
\label{tb:algp}
\begin{center}
{\ttfamily
\begin{tabular}{c}\hline\hline
1)\hspace{3mm}Initialize $\mathbf{\Sigma}_l,~l\in L $ such that $\trace{\frac{\mathbf{\Sigma}_l}{P_l}}\leq 1$~~~~~~\\
2)\hspace{3mm}$\mathbf{\Omega}_l \leftarrow \mathbf{I}+\sum_{k\in L\backslash \{l\}} \mathbf{H}_{lk}\mathbf{\Sigma}_k \mathbf{H}_{lk}^+,~l\in L$~~~~~~~~~~~~~~~~\\
3)\hspace{3mm}$\mathbf{\Lambda}_l \leftarrow w_l\left((\mathbf{\Omega}_l)^{-1}-\left(\mathbf{\Omega}_l + \mathbf{H}_{ll}\mathbf{\Sigma}_{l}\mathbf{H}_{ll}^{+}\right)^{-1}\right),~l\in L$~~~~~~~\\
4)\hspace{3mm}$\mathbf{\Phi}_l \leftarrow \mu \mathbf{I}+\sum_{k\in L\backslash \{l\}} \mathbf{H}_{kl}^+\mathbf{\Lambda}_k \mathbf{H}_{kl},~l\in L$~~~~~~~~~~~~~~~\\
5)\hspace{3mm}$\tilde{\mathbf{\Sigma}}_l \leftarrow w_l\left((\mathbf{\Phi}_l)^{-1}-\left(\mathbf{\Phi}_l + \mathbf{H}_{ll}\mathbf{\Lambda}_{l}\mathbf{H}_{ll}^{+}\right)^{-1}\right),~l\in L$~~~~~~~\\
6)\hspace{3mm}$\mu \leftarrow 0$ if $\trace{\frac{\mb{\tilde{\mathbf{\Sigma}}}_l (0^+)}{P_l}}< 1$; otherwise, choose $\mu$~~~\\
such that $\trace{\frac{\tilde{\mathbf{\Sigma}}_l (\mu)}{P_l}}= 1$~~~~~~~~~~~\\
7)\hspace{3mm}$\lambda \leftarrow \max_{l\in L}\trace{\frac{\tilde{\mathbf{\Sigma}}_l}{P_T}}$~~~~~~~~~~~~~~~~~~~~~~~~~~~~~\\
8)\hspace{3mm}$\mathbf{\Sigma}_l \leftarrow \frac{\tilde{\mathbf{\Sigma}}_l}{\lambda},~l\in L$~~~~~~~~~~~~~~~~~~~~~~~~~~~~~~~~~\\
9)\hspace{3mm}Go to 2)~~~~~~~~~~~~~~~~~~~~~~~~~~~~~~~~~~~~~~\\
\hline\hline
\end{tabular}
}
\end{center}
\end{table}

\section{Numerical Examples}\label{sec:sim}
In this section, we provide numerical examples to complement the analysis in the previous sections. Consider a network with $L = 10$ links,  corresponding to $10$  transmitter-receiver pairs that interfere with each other. Each link is equipped with 3 (4) antennas at its transmitter (receiver). The channel matrices have zero-mean, unit-variance, i.i.d. complex Gaussian entries. We will consider and compare the networks with low, moderate, and high interference, which are characterized by scaling the interference channel matrices $\mathbf{H}_{ij},~ i \neq j$ with a factor of $0.1$, $1$, and $5$ respectively. The weights $w_l$'s are uniformly drawn from $[0.5, 1]$, for the case with total power constraint $P_T=10$, and for the case with the per-link power constraints $P_l$'s are uniformly drawn from $\{1,2,\cdots,10\}$.

For the computation, we use SDPT3 \cite{toh1999sdpt3} combined with the problem parser YALMIP \cite{lofberg2004yalmip}.
The algorithm implementation is straightforward except for finding $\mu$, for which we use a bisection search method.

\paragraph{The network with the total power constraint}
Figures \ref{fig:lowt}, \ref{fig:modt} and \ref{fig:hight} show the monotonic convergence of our algorithm in a network with the total power constraint. We see that the convergence speed depends on the strength of interference. As the interference becomes stronger, the weighted sum-rate becomes highly non-convex.  This intrinsic difficulty of the problem makes the convergence slow.
However, in the network with  low and moderate interference, the algorithm shows very fast convergence.
Also note that the stronger the interference, the smaller the weighted sum-rate is. 

\begin{figure}[ht!]
\centering
\includegraphics[width=0.4\textwidth]{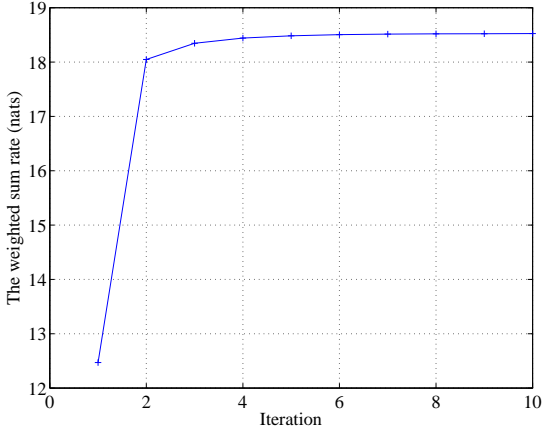}
\caption{The network with low interference and total power constraint.}
\label{fig:lowt}
\end{figure}

\begin{figure}[ht!]
\centering
\includegraphics[width=0.4\textwidth]{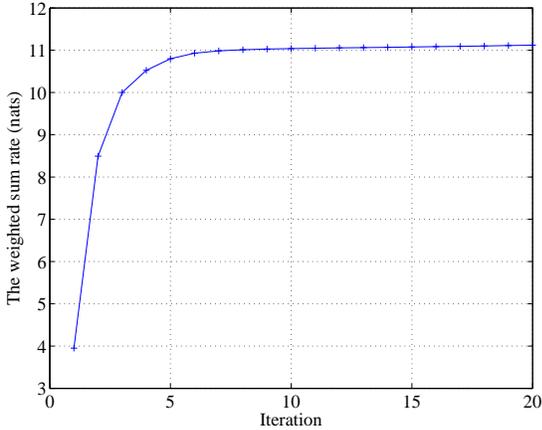}
\caption{The network with moderate interference and total power constraint.}
\label{fig:modt}
\end{figure}

\begin{figure}[ht!]
\centering
\includegraphics[width=0.4\textwidth]{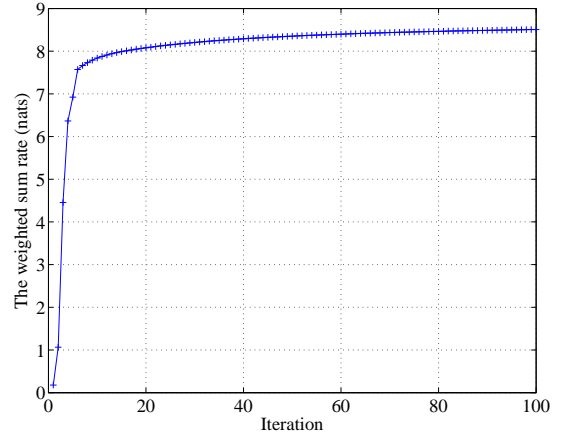}
\caption{The network with high interference and total power constraint.}
\label{fig:hight}
\end{figure}

\paragraph{The network with the per-link power constraints}
Figures \ref{fig:low}, \ref{fig:mod} and \ref{fig:high} show the monotonic convergence of our algorithm in a network with the per-link power constraints. Again, we see that the stronger the interference, the slower the algorithm converges; but in the network with  low and moderate interference, the algorithm shows fast convergence.

\begin{figure}[ht!]
\centering
\includegraphics[width=0.4\textwidth]{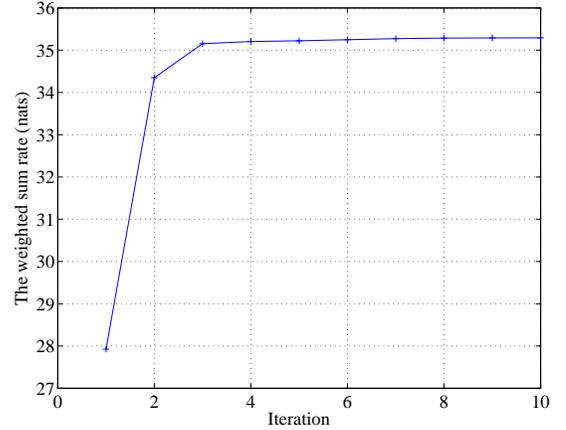}
\caption{The network with low interference and per-link power constraints.}
\label{fig:low}
\end{figure}

\begin{figure}[ht!]
\centering
\includegraphics[width=0.4\textwidth]{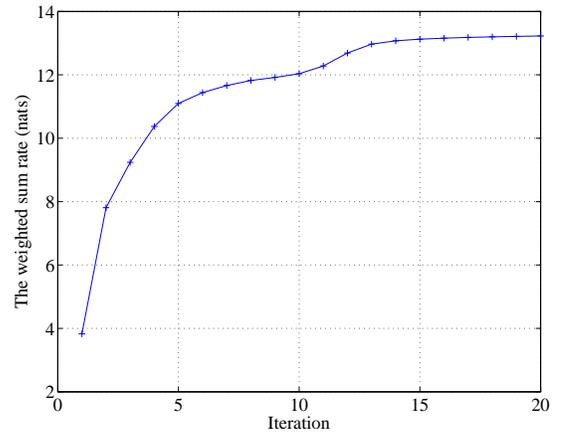}
\caption{The network with moderate interference and per-link power constraints.}
\label{fig:mod}
\end{figure}

\begin{figure}[ht!]
\centering
\includegraphics[width=0.4\textwidth]{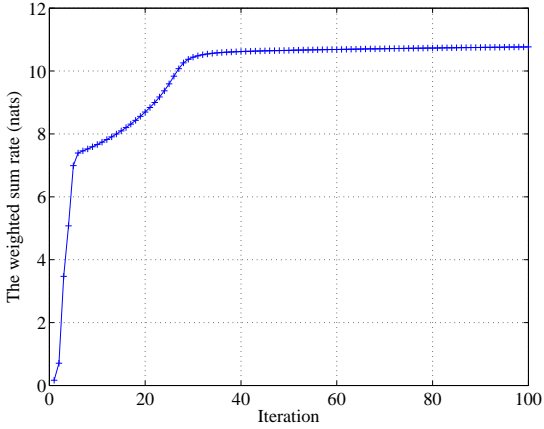}
\caption{The network with high interference and per-link power constraints.}
\label{fig:high}
\end{figure}

\paragraph{Complexity Analysis}
We have evaluated in the above the monotonic convergence of our algorithm in terms of the number of iterations. We now analyze the complexity of each iteration. Recall that $L$ is the number of data links, and for simplicity, assume that each link has $N$ transmit (and receive) antennas, so the resulting $\mathbf{\Sigma}_{l}$ is an $N\times N$ matrix.
Suppose that we use the straightforward matrix multiplication and inversion, 
then the complexity of these operations are $O(N^{3})$.
In each iteration, $\mathbf{\Omega}_{l}$ incurs a complexity of $O(LN^{3})$, and so does $\mathbf{\Omega}_{l}+\mathbf{H}_{l,l}\mathbf{\Sigma}_{l}^{(n+1)}\mathbf{H}_{l,l}^{+}$.
Furthermore, $\mathbf{\Phi}_l$ incurs a complexity of $O(LN^{3})$, and so do $\tilde{\mathbf{\Sigma}}_l$ and 
$\mathbf{\Sigma}_l$.  Since we need $L$ of these operations, the total complexity is $O(L^2N^{3})$. If we use faster matrix multiplication such as the one in \cite{williams2012multiplying} that has a complexity of $O(N^{2.3727})$, we can reduce computational complexity at each iteration to $O(L^2N^{2.3727})$.


\section{Conclusion}
We take a new perspective on the weighted sum-rate maximization in the MIMO interference network, by formulating an equivalent max-min problem. The Lagrangian duality of the equivalent max-min problem provides an elegant way to establish the sum-rate duality between an interference network and its reciprocal when such a duality exists, and more importantly, suggests a novel iterative minimax algorithm for the weighted sum-rate maximization. The design and convergence proof of the iterative minimax algorithm use only general convex analysis and matrix analysis. They apply and extend to any max-min problems where the objective function is concave in the maximizing variables and convex in the minimizing variables and the constraints are convex, and thus provides a general class of algorithms for such optimization problems. This paper presents a promising step and lends hope for establishing a general framework based on the minimax Lagrangian duality for characterizing the weighted sum-rate and developing efficient algorithms for general MIMO interference networks. 

\section*{Appendix: Proof of Lemma \ref{thm:es}}
Before we present the proof, we first define an extended difference of logdet function. Let $\mb{A}, \mb{B}\in \mathcal{S}^n_+$, the difference of logdet function  
$$F(\mb{A}, \mb{B}) = \log\left|\mb{A}+\mb{B}\right|-\log\left|\mb{B}\right|$$  
is not well-defined if $\mb{B}$ is not positive definite. If there exists a nonsingular square matrix $\mb{T}$ such that 
\begin{equation*}
\mb{T}^+\mb{A}\mb{T}=\left[\begin{array}{cc}
\mb{A}_1 &  \\
 & \mb{0} 
 \end{array}
\right],\ \  \mb{T}^+\mb{B}\mb{T}=\left[\begin{array}{cc}
\mb{B}_1 &  \\
 & \mb{0} 
 \end{array}
\right]
\end{equation*}
where $\mb{A}_1\in \mathcal{S}^m_+$, $\mb{B}_1\in \mathcal{S}^m_{++}$ for some $m\leq n$, then we can define an extended difference of logdet function:
\begin{equation*}
F(\mb{A}, \mb{B}) := \log\left|\mb{A}_1+\mb{B}_1\right|-\log\left|\mb{B}_1\right|.
\end{equation*}
With the definition of the above extended function,  matrix inverse resulting from the derivative of logdet function is pseudo inverse when the matrix involved is singular. In the following, {\em a difference of logdet function is meant to be the extended difference of logdet function, and matrix inverse is pseudo inverse when the matrix involved is singular}. 

We now come to the proof of Lemma \ref{thm:es}. For simplicity of presentation and without loss of generality, we reload notations and consider the following problem:
\begin{equation}\label{eq:ss}
\max_{\mathbf{\Sigma}\succeq 0} \min_{\mathbf{\Omega}\succeq 0}\ \ 
\log|\mb{\Omega} + \mb{H} \mb{\Sigma} \mb{H}^{+}| - \log|\mb{\Omega}| + \trace{\mb{\Lambda} \mb{\Omega}} - \trace{\mb{\Phi} \mb{\Sigma}}
\end{equation}
where $\mb{\Lambda} \succeq 0$ and $\mb{\Phi} \succeq 0$. The key idea of the proof is to show that problem \eqref{eq:ss} is equivalent to a problem with $\mb{\Omega}$ restricted to 
$\mb{\Omega} = \mb{H} \mb{X} \mb{H}^{+},~\mb{X}\succeq 0$.

\begin{lemma}\label{thm:eq}
The problem \eqref{eq:ss} is equivalent to the following problem:
\begin{eqnarray}
\nonumber \max_{\mathbf{\Sigma}\succeq 0}\min_{\mathbf{\Omega}\succeq 0}
\hspace{-4mm}&& \log|\mb{\Omega} + \mb{H} \mb{\Sigma} \mb{H}^{+}| - \log|\mb{\Omega}| + \trace{\mb{\Lambda} \mb{\Omega}} - \trace{\mb{\Phi} \mb{\Sigma}}\\
\label{eq:sse-ob}\\
\text{s.t.} \hspace{-4mm}&& \mb{\Omega} = \mb{H} \mb{X} \mb{H}^{+},\ \ \mb{X}\succeq 0.\label{eq:sse-con}
\end{eqnarray}
\end{lemma}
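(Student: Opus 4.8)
The plan is to show two inclusions: every feasible point of the restricted problem \eqref{eq:sse-ob}--\eqref{eq:sse-con} is feasible for \eqref{eq:ss} with the same objective value (trivial, since the restricted problem just adds a constraint), and conversely that for any $\mb{\Sigma}\succeq 0$ the inner minimization over all $\mb{\Omega}\succeq 0$ in \eqref{eq:ss} attains its value at some $\mb{\Omega}$ of the form $\mb{H}\mb{X}\mb{H}^{+}$. Granting the inner claim, the two problems share the same optimal value and the saddle points of \eqref{eq:ss} are captured by those of the restricted problem, which is what ``equivalent'' means here. So the whole argument reduces to analyzing, for fixed $\mb{\Sigma}$, the convex function $g(\mb{\Omega}) = \log|\mb{\Omega}+\mb{H}\mb{\Sigma}\mb{H}^{+}| - \log|\mb{\Omega}| + \trace{\mb{\Lambda}\mb{\Omega}}$ on $\{\mb{\Omega}\succeq 0\}$, using the extended-logdet convention just introduced.

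First I would decompose the space according to $\mathcal{R}(\mb{H})$, the range of $\mb{H}$. Write $\mb{M} := \mb{H}\mb{\Sigma}\mb{H}^{+}$, whose range lies in $\mathcal{R}(\mb{H})$, and let $P$ be the orthogonal projector onto $\mathcal{R}(\mb{H})$. Choose a unitary $\mb{U} = [\mb{U}_1\ \mb{U}_2]$ with columns of $\mb{U}_1$ spanning $\mathcal{R}(\mb{H})$, and block-partition $\mb{\Omega}$ accordingly as $\mb{\Omega} = \mb{U}\begin{bmatrix}\mb{\Omega}_{11}&\mb{\Omega}_{12}\\ \mb{\Omega}_{12}^{+}&\mb{\Omega}_{22}\end{bmatrix}\mb{U}^{+}$. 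The point is that $\mb{M}$ has the block form $\mb{U}\begin{bmatrix}\mb{M}_{1}&\mb{0}\\ \mb{0}&\mb{0}\end{bmatrix}\mb{U}^{+}$, so the difference $\log|\mb{\Omega}+\mb{M}| - \log|\mb{\Omega}|$ (in the extended sense) only ``sees'' the action of $\mb{\Omega}$ on $\mathcal{R}(\mb{H})$: intuitively, adding $\mb{M}$ changes the quadratic form only in the $\mb{U}_1$ directions, and the Schur-complement identity $|\mb{\Omega}| = |\mb{\Omega}_{22}|\,|\mb{\Omega}_{11} - \mb{\Omega}_{12}\mb{\Omega}_{22}^{-1}\mb{\Omega}_{12}^{+}|$ lets the $\mb{\Omega}_{22}$ factors cancel between the two logdets, leaving the difference as a function of the Schur complement $\mb{S} := \mb{\Omega}_{11} - \mb{\Omega}_{12}\mb{\Omega}_{22}^{+}\mb{\Omega}_{12}^{+}$ alone (with the appropriate pseudo-inverse reading when $\mb{\Omega}_{22}$ is singular, which is exactly the case the extended definition was set up to handle). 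Meanwhile the linear term splits as $\trace{\mb{\Lambda}\mb{\Omega}} = \trace{\mb{\Lambda}_{11}\mb{\Omega}_{11}} + 2\,\mathrm{Re}\,\trace{\mb{\Lambda}_{12}^{+}\mb{\Omega}_{12}} + \trace{\mb{\Lambda}_{22}\mb{\Omega}_{22}}$, each piece of which is minimized (over the relevant block, subject to $\mb{\Omega}\succeq 0$) by driving $\mb{\Omega}_{22}$ and $\mb{\Omega}_{12}$ to zero — that only decreases $\trace{\mb{\Lambda}_{22}\mb{\Omega}_{22}}\ge 0$ and the cross term, while increasing $\mb{S}$ toward $\mb{\Omega}_{11}$, and one checks the logdet difference is monotone in $\mb{S}$ in the right direction so that this does not increase $g$. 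Taking $\mb{\Omega}_{22}=\mb{0}$, $\mb{\Omega}_{12}=\mb{0}$ forces $\mb{\Omega} = \mb{U}_1\mb{\Omega}_{11}\mb{U}_1^{+}$, which has range inside $\mathcal{R}(\mb{H})$, i.e. $\mb{\Omega} = \mb{H}\mb{X}\mb{H}^{+}$ for a suitable $\mb{X}\succeq 0$ (any PSD matrix with range contained in $\mathcal{R}(\mb{H})$ can be written this way). Hence the infimum over $\{\mb{\Omega}\succeq 0\}$ equals the infimum over the restricted set, giving the reverse inequality and completing the equivalence.

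The main obstacle I anticipate is making the cancellation and monotonicity arguments rigorous in the \emph{singular} regime — precisely where $\mb{\Omega}_{22}$ or $\mb{S}$ fails to be invertible, and where $\log|\mb{\Omega}|$ or $\log|\mb{\Omega}+\mb{M}|$ is $-\infty$ in the naive reading. This is where one has to lean carefully on the extended difference-of-logdet definition: one must exhibit, for each relevant $\mb{\Omega}$, a nonsingular $\mb{T}$ simultaneously block-diagonalizing $\mb{\Omega}+\mb{M}$ and $\mb{\Omega}$ into a nonsingular part plus a common zero block, verify that the extended value is exactly the Schur-complement expression above, and confirm it varies continuously/monotonically as $\mb{\Omega}_{12},\mb{\Omega}_{22}\to\mb{0}$. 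A clean way to sidestep the worst of this is a limiting argument: prove the reduction first for $\mb{\Omega}\succ 0$ (where Schur complements and $\log|\cdot|$ are classical and everything is smooth), then pass to the boundary using continuity of the extended function — so the bulk of the work is a perturbation/continuity lemma for $F(\mb{A},\mb{B})$ on $\mathcal{S}^n_+$, after which the algebra is routine. I would also remark that the ``equivalence'' is in the sense of equal optimal value and a correspondence of saddle points, not literal equality of all KKT points, which matches the footnoted caveat about the generalized solution $\mb{\Omega}_l = \cdots + \mb{\Omega}_l^c$.
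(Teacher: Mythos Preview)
Your overall strategy—fix $\mb{\Sigma}$, decompose orthogonally with respect to $\mathcal{R}(\mb{H})$, and show the inner minimum is attained on $\{\mb{H}\mb{X}\mb{H}^+:\mb{X}\succeq 0\}$—is a reasonable route and differs from the paper's, but the specific replacement step you propose has a genuine gap. You set $\mb{\Omega}_{12}=\mb{0}$, $\mb{\Omega}_{22}=\mb{0}$ while keeping $\mb{\Omega}_{11}$, asserting this ``only decreases $\trace{\mb{\Lambda}_{22}\mb{\Omega}_{22}}\ge 0$ and the cross term''. The cross-term claim is false: $2\,\mathrm{Re}\,\trace{\mb{\Lambda}_{12}^{+}\mb{\Omega}_{12}}$ can be negative for PSD $\mb{\Lambda},\mb{\Omega}$, so zeroing $\mb{\Omega}_{12}$ can \emph{increase} the linear term, and the logdet drop need not compensate. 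A concrete $2\times 2$ counterexample: with $\mb{\Lambda}=\bigl[\begin{smallmatrix}1&1\\1&1\end{smallmatrix}\bigr]$, $\mb{\Omega}=\bigl[\begin{smallmatrix}4&-3\\-3&4\end{smallmatrix}\bigr]$, $\mb{M}=\bigl[\begin{smallmatrix}1&0\\0&0\end{smallmatrix}\bigr]$, your replacement $\mb{\Omega}'=\bigl[\begin{smallmatrix}4&0\\0&0\end{smallmatrix}\bigr]$ gives $g(\mb{\Omega}')-g(\mb{\Omega})=2+\log(35/44)>0$.

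The easy repair within your framework is to replace $\mb{\Omega}$ by $\mb{U}_1\mb{S}\mb{U}_1^{+}$ (the Schur complement block, not $\mb{\Omega}_{11}$). Then the logdet difference is \emph{unchanged}, and the trace change equals $-\trace{\mb{\Lambda}\,\mb{Y}\mb{Y}^{+}}\le 0$ with $\mb{Y}=\bigl[\begin{smallmatrix}\mb{\Omega}_{12}\mb{\Omega}_{22}^{-1/2}\\ \mb{\Omega}_{22}^{1/2}\end{smallmatrix}\bigr]$, since $\mb{\Lambda}\succeq 0$ and $\mb{Y}\mb{Y}^{+}\succeq 0$; this handles all three trace pieces at once. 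For comparison, the paper avoids the cross-term difficulty altogether by a different decomposition: it simultaneously diagonalizes $\mb{\Lambda}$ and $\mb{H}\mb{\Sigma}\mb{H}^{+}$ via a \emph{nonsingular} (not unitary) $\mb{T}$, so that the transformed trace $\trace{\mb{T}\mb{\Lambda}\mb{T}^{+}\tilde{\mb{\Omega}}}$ involves only diagonal blocks $\trace{\mb{S}_1\tilde{\mb{\Omega}}_{11}}+\trace{\mb{S}_3\tilde{\mb{\Omega}}_{33}}$ with no off-diagonal coupling; the reduction then proceeds by zeroing blocks one at a time using the block-determinant formula and nonnegativity of these diagonal traces, followed by a null-space argument to conclude $\mb{\Omega}^\star=\mb{H}\mb{X}\mb{H}^{+}$.
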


\begin{proof}
Since $\mb{H} \mb{\Sigma} \mb{H}^{+} \succeq 0$ and $\mb{\Lambda} \succeq 0$, there exists a nonsingular square matrix $\mb{T}$ such that
\BEAS
\mb{T}\mb{\Lambda}\mb{T}^{+} &=& \left[\begin{array}{cccc}
\mb{S}_1 & &  & \\
 & \mb{0} &  & \\
 & & \mb{S}_3 & \\
 &  &  & \mb{0}
\end{array}
\right],\\
(\mb{T}^{+})^{-1}\mb{H} \mb{\Sigma} \mb{H}^{+}\mb{T}^{-1}&=& \left[\begin{array}{cccc}
\mb{S}_1 &  & & \\
 & \mb{S}_2 &  & \\
 &  & \mb{0}& \\
& &  & \mb{0}
\end{array}
\right],
\EEAS
where $\mb{S}_1, \mb{S}_2, \mb{S}_3$  are diagonal and positive definite; see, e.g., Theorem 3.22 in~\cite{Zhou95}. Let  $\mb{\Omega} = \mb{T}^{+}\tilde{\mb{\Omega}} \mb{T}$, problem \eqref{eq:ss} becomes
\begin{eqnarray}
\nonumber \max_{\mathbf{\Sigma}\succeq 0} \min_{\tilde{\mathbf{\Omega}}\succeq 0}\hspace{-4mm}&&
 \log|\tilde{\mb{\Omega}} + (\mb{T}^{+})^{-1}\mb{H} \mb{\Sigma} \mb{H}^{+}\mb{T}^{-1}| - \log|\tilde{\mb{\Omega}}| \\
\nonumber \hspace{-4mm}&&+ \trace{\mb{T}\mb{\Lambda}\mb{T}^{+} \tilde{\mb{\Omega}}} - \trace{\mb{\Phi} \mb{\Sigma}}.
\end{eqnarray}

Now, consider those terms in the objective function that depend on $\tilde{\mathbf{\Omega}}$:
\BEAS
\hspace{-2mm}&&\hspace{-1mm} \tilde{\mathcal{L}}(\tilde{\mathbf{\Omega}})\\
\hspace{-2mm}&=&\hspace{-1mm}\log|\tilde{\mb{\Omega}} + (\mb{T}^{+})^{-1}\mb{H} \mb{\Sigma} \mb{H}^{+}\mb{T}^{-1}| - \log|\tilde{\mb{\Omega}}| + \trace{\mb{T}\mb{\Lambda}\mb{T}^{+} \tilde{\mb{\Omega}}}\\
\hspace{-2mm}&=& \hspace{-1mm}\log\left|
\left[
\begin{array}{cccc}
\tilde{\mb{\Omega}}_{11} + \mb{S}_1 & \tilde{\mb{\Omega}}_{12} & \tilde{\mb{\Omega}}_{13} & \tilde{\mb{\Omega}}_{14}\\
\tilde{\mb{\Omega}}_{12}^+ & \tilde{\mb{\Omega}}_{22} + \mb{S}_2 & \tilde{\mb{\Omega}}_{23} & \tilde{\mb{\Omega}}_{24}\\
\tilde{\mb{\Omega}}_{13}^+ & \tilde{\mb{\Omega}}_{23}^+ &   \tilde{\mb{\Omega}}_{33} & \tilde{\mb{\Omega}}_{34}\\
\tilde{\mb{\Omega}}_{14}^+ & \tilde{\mb{\Omega}}_{24}^+ & \tilde{\mb{\Omega}}_{34}^+ &  \tilde{\mb{\Omega}}_{44}
\end{array}
\right]\right|\\
\hspace{-2mm}&&\hspace{-1mm}
-
\log\left|
\left[
\begin{array}{cccc}
\tilde{\mb{\Omega}}_{11}  & \tilde{\mb{\Omega}}_{12} & \tilde{\mb{\Omega}}_{13} & \tilde{\mb{\Omega}}_{14}\\
\tilde{\mb{\Omega}}_{12}^+  & \tilde{\mb{\Omega}}_{22} & \tilde{\mb{\Omega}}_{23} & \tilde{\mb{\Omega}}_{24}\\
\tilde{\mb{\Omega}}_{13}^+ & \tilde{\mb{\Omega}}_{23}^+&   \tilde{\mb{\Omega}}_{33} & \tilde{\mb{\Omega}}_{34}\\
\tilde{\mb{\Omega}}_{14}^+ & \tilde{\mb{\Omega}}_{24}^+ & \tilde{\mb{\Omega}}_{34}^+ &  \tilde{\mb{\Omega}}_{44}
\end{array}
\right]\right|\\
\hspace{-2mm}&&\hspace{-1mm}
+
\trace{\mb{S}_1 \tilde{\mb{\Omega}}_{11}} + \trace{\mb{S}_3 \tilde{\mb{\Omega}}_{33}}
\EEAS
and its minimization over $\tilde{\mathbf{\Omega}}\succeq 0$. By the determinant formula for block matrix, when $A$  is invertible
$\left|\left[\begin{array}{cc}
A & B\\
C & D
\end{array}\right]\right|=\left|A\right| \left|D-CA^{-1}B\right|$, and the fact that the determinant is a continuous function, we have
\BEAS
\tilde{\mathcal{L}}(\tilde{\mathbf{\Omega}})
&\geq&
\log\left|
\left[
\begin{array}{ccc}
\tilde{\mb{\Omega}}_{11} + \mb{S}_1 & \tilde{\mb{\Omega}}_{12} & \tilde{\mb{\Omega}}_{13}\\
\tilde{\mb{\Omega}}_{12}^+ & \tilde{\mb{\Omega}}_{22} + \mb{S}_2 & \tilde{\mb{\Omega}}_{23}\\
\tilde{\mb{\Omega}}_{13}^+ & \tilde{\mb{\Omega}}_{23}^+ &   \tilde{\mb{\Omega}}_{33}
\end{array}
\right]\right|\\
&&-
\log\left|
\left[
\begin{array}{ccc}
\tilde{\mb{\Omega}}_{11}  & \tilde{\mb{\Omega}}_{12} & \tilde{\mb{\Omega}}_{13}\\
\tilde{\mb{\Omega}}_{12}^+ & \tilde{\mb{\Omega}}_{22} & \tilde{\mb{\Omega}}_{23}\\
\tilde{\mb{\Omega}}_{13}^+ & \tilde{\mb{\Omega}}_{23}^+ &   \tilde{\mb{\Omega}}_{33}
\end{array}
\right]\right|\\
&&
+
\trace{\mb{S}_1 \tilde{\mb{\Omega}}_{11}} + \trace{\mb{S}_3 \tilde{\mb{\Omega}}_{33}},
\EEAS
where the equality holds when $\tilde{\mb{\Omega}}_{44} \rightarrow \infty$ but can be achieved when $\tilde{\mb{\Omega}}_{i4} = \mb{0}$ for all $i=1, 2, 3, 4$.\footnote{By the determinant formula and the Sylvester's criterion for positive semidefinite matrix, if  $\tilde{\mb{\Omega}}_{i4} = {0}$, then $\tilde{\mb{\Omega}}_{i4} = \mb{0}$ for all $i=1, 2, 3$.} We will restrict $\tilde{\mathbf{\Omega}}$ to those with $\tilde{\mb{\Omega}}_{i4} = \mb{0}$ for all $i=1, 2, 3, 4$, as the equality is achieved at one of those matrices. 

Since $\mb{S}_3 \succeq 0$ and $\tilde{\mb{\Omega}}_{33} \succeq {0}$, $\trace{\mb{S}_3 \tilde{\mb{\Omega}}_{33}} \geq 0$. We further have\BEAS
\tilde{\mathcal{L}}(\tilde{\mathbf{\Omega}})
&\geq&
\log\left|
\left[
\begin{array}{cc}
\tilde{\mb{\Omega}}_{11} + \mb{S}_1 & \tilde{\mb{\Omega}}_{12}\\
\tilde{\mb{\Omega}}_{12}^+ & \tilde{\mb{\Omega}}_{22} + \mb{S}_2
\end{array}
\right]\right|\\
&&-
\log\left|
\left[
\begin{array}{cc}
\tilde{\mb{\Omega}}_{11}  & \tilde{\mb{\Omega}}_{12}\\
\tilde{\mb{\Omega}}_{12}^+ & \tilde{\mb{\Omega}}_{22}
\end{array}
\right]\right|
+
\trace{\mb{S}_1 \tilde{\mb{\Omega}}_{11}},
\EEAS
where the equality is achieved when additionally $\tilde{\mb{\Omega}}_{i3} = \mb{0}$ for all $i=1, 2, 3$. Therefore, we conclude that there exists a minimizer $\tilde{\mb{\Omega}}^{\star}$ with the form:
\BEAS
\tilde{\mb{\Omega}}^{\star}
&=&
\left[
\begin{array}{cccc}
\tilde{\mb{\Omega}}_{11}  & \tilde{\mb{\Omega}}_{12} &  & \\
\tilde{\mb{\Omega}}_{12}^+& \tilde{\mb{\Omega}}_{22} &  & \\
 &  &   \mb{0} & \\
&  &  &  \mb{0}
\end{array}
\right].
\EEAS

The above manipulation is to restrict the problem to an equivalent, {\em truncated system} where we ignore the interference-plus-noise of a channel whose signal is zero. As mentioned in Section \ref{sect:mdd}, intuitively, the equivalence of this truncated system to the original max-min problem follows from the fact that when the signal is zero it does not matter what the interference-plus-noise is.

Now, consider a vector $v$ such that $\mb{H}^{+} v = 0$, we have
\BEAS
v^{+}\mb{H}\mb{\Sigma}\mb{H}^Tv &=& v^{+}\mb{T}^{+} \left[\begin{array}{cccc}
\mb{S}_1 &  &  & \\
 & \mb{S}_2 &  & \\
&  & \mb{0}& \\
 &  &  & \mb{0}
\end{array}
\right]
\mb{T}
v\\
&=&0,						
\EEAS
which implies
\BEAS
\mb{T}v &=& 
\left[
\begin{array}{cccc}
\mb{0}&
\mb{0}&
\bar{\mb{v}}_3&
\bar{\mb{v}}_4
\end{array}
\right]^T.
\EEAS
Therefore,
\BEAS
v^{+}\mb{\Omega}^{\star} v 
&=& v^{+} \mb{T}^{+} \tilde{\mb{\Omega}}^{\star} \mb{T} v\\
&=&
v^{+}\mb{T}^{+}
\left[
\begin{array}{cccc}
\tilde{\mb{\Omega}}_{11}  & \tilde{\mb{\Omega}}_{12} &  & \\
\tilde{\mb{\Omega}}_{12}^+ & \tilde{\mb{\Omega}}_{22} & & \\
&  &   \mb{0} & \\
&  &  &  \mb{0}
\end{array}
\right]
\mb{T}
v\\
&=& {0}.
\EEAS
This implies the null space $\mathcal{N}(H^{+}) \subset \mathcal{N}(\mb{\Omega}^{\star})$, and further, the range
$\mathcal{R}(H) \supset \mathcal{R}({\mb{\Omega}^{\star}}^{+}) = \mathcal{R}({\mb{\Omega}^{\star}})$. Therefore, there exists a matrix $\mb{X}\succeq 0$ such that
\BEAS
\mb{\Omega}^{\star} &=& \mb{H}\mb{X} \mb{H}^{+}.
\EEAS
We conclude that there exists an optimal solution with $\mb{\Omega}^{\star} = \mb{H}\mb{X} \mb{H}^{+}$, and thus problem \eqref{eq:ss} and problem \eqref{eq:sse-ob}-\eqref{eq:sse-con} are equivalent.
\end{proof}

With Lemma \ref{thm:eq}, we are ready to present the explicit saddle point solution. Consider the logdet terms in the objective function:
\BEAS
&&\log|\mb{\Omega} + \mb{H} \mb{\Sigma} \mb{H}^{+}| - \log|\mb{\Omega}|\\
&=& \log|\mb{H}\mb{X}\mb{H}^{+} + \mb{H} \mb{\Sigma} \mb{H}^{+}| - \log|\mb{H}\mb{X}\mb{H}^{+}|\\
&=& \log|\mb{H}^{+}\mb{H}(\mb{X}+\mb{\Sigma})|-\log|\mb{H}^{+}\mb{H}\mb{X}|\\
&=& \log|\mb{X}+\mb{\Sigma}| - \log|\mb{X}|.
\EEAS
The singularity issue comes out when $\mb{H}^{+}\mb{H}$ is not invertible, but this can be handled by adding a small term $\kappa \mb{I},~\kappa>0$ to $\mb{H}^{+}\mb{H}$ and then taking the limit $\kappa\to 0$. Thus, we can transform problem \eqref{eq:ss} into the following simple one:
\begin{equation}\label{eq:sp}
 \max_{\mathbf{\Sigma}\succeq 0}\min_{\mathbf{X}\succeq 0}\ \
 \log|\mb{X} + \mb{\Sigma}| - \log|\mb{X}| + \trace{\mb{H}^{+}\mb{\Lambda}\mb{H} \mb{X}} - \trace{\mb{\Phi} \mb{\Sigma}}.
 \end{equation}
 By the first order optimality condition for the saddle point, we have
 \BEAS
 (\mb{X} + \mb{\Sigma})^{-1} - \mb{\Phi} &=& 0,\\
(\mb{X} + \mb{\Sigma})^{-1} - \mb{X}^{-1} +\mb{H}^{+}\mb{\Lambda}\mb{H} &=& 0,
\EEAS
from which we obtain the following explicit saddle point solution:
\begin{eqnarray}
\mb{\Sigma} &=& \mb{\Phi}^{-1} - (\mb{\Phi} + \mb{H}^{+}\mb{\Lambda}\mb{H})^{-1},\label{eq:ses-1}\\
\nonumber \mb{X} &=& (\mb{\Phi} + \mb{H}^{+}\mb{\Lambda}\mb{H})^{-1}, 
\end{eqnarray}
and in terms of $\Omega$ we have
\begin{eqnarray}
 \mb{\Omega} &=& \mb{H}\mb{X}\mb{H}^{+}
				=\mb{H}(\mb{\Phi} + \mb{H}^{+}\mb{\Lambda}\mb{H})^{-1}\mb{H}^{+}.\label{eq:ses-2}
\end{eqnarray}

Note that problem \eqref{eq:sp} is well-defined only when $\mb{\Sigma},~\mb{X}$ satisfy the property specified for matrices $\mb{A},~\mb{B}$ in the beginning of this Appendix. This is verified as follows.
\begin{proposition}
The objective function in problem \eqref{eq:sp} is well-defined for matrices $\mb{\Sigma},~\mb{X}$ that are given by \eqref{eq:ses-1}-\eqref{eq:ses-2}.
\end{proposition}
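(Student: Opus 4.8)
The plan is to verify \emph{directly} that the specific pair $(\mb{\Sigma},\mb{X})$ of \eqref{eq:ses-1}--\eqref{eq:ses-2} satisfies the structural condition from the beginning of this Appendix, i.e. that there is a nonsingular $\mb{T}$ with $\mb{T}^{+}\mb{\Sigma}\mb{T}$ and $\mb{T}^{+}\mb{X}\mb{T}$ simultaneously of the form $\mathrm{diag}(\mb{\Sigma}_1,\mb{0})$ and $\mathrm{diag}(\mb{X}_1,\mb{0})$ with $\mb{\Sigma}_1\succeq 0$, $\mb{X}_1\succ 0$ and a common block size $m$. The first move is to reduce this to two facts: (i) $\mb{X}\succeq 0$ and (ii) $\mb{\Sigma}\succeq 0$. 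Indeed, once both hold, note $\mb{\Sigma}+\mb{X}=\mb{\Phi}^{-1}$ (with ``$-1$'' the pseudo inverse), so $\mathcal{R}(\mb{\Sigma})\subseteq\mathcal{R}(\mb{\Sigma}+\mb{X})=\mathcal{R}(\mb{\Phi})$, while $\mathcal{R}(\mb{X})=\mathcal{R}(\mb{\Phi}+\mb{H}^{+}\mb{\Lambda}\mb{H})=\mathcal{R}(\mb{\Phi})+\mathcal{R}(\mb{H}^{+}\mb{\Lambda}\mb{H})\supseteq\mathcal{R}(\mb{\Phi})$ by the range identity for a sum of positive semidefinite matrices; hence $\mathcal{R}(\mb{\Sigma})\subseteq\mathcal{R}(\mb{X})$, equivalently $\mathcal{N}(\mb{X})\subseteq\mathcal{N}(\mb{\Sigma})$. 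Taking $\mb{T}$ to be a unitary whose first $m=\mathrm{rank}(\mb{X})$ columns span $\mathcal{R}(\mb{X})$ and whose remaining columns span $\mathcal{N}(\mb{X})$ then puts $\mb{X}$ into the form $\mathrm{diag}(\mb{X}_1,\mb{0})$ with $\mb{X}_1\succ 0$; the inclusion $\mathcal{N}(\mb{X})\subseteq\mathcal{N}(\mb{\Sigma})$ forces $\mb{\Sigma}=\mathrm{diag}(\mb{\Sigma}_1,\mb{0})$ with the same block size $m$, and $\mb{\Sigma}_1\succeq 0$ because $\mb{\Sigma}\succeq 0$. Thus $F(\mb{\Sigma},\mb{X})=\log|\mb{\Sigma}_1+\mb{X}_1|-\log|\mb{X}_1|$ is well-defined.

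Fact (i) is immediate, since $\mb{X}=(\mb{\Phi}+\mb{H}^{+}\mb{\Lambda}\mb{H})^{-1}$ is the pseudo inverse of a positive semidefinite matrix. Fact (ii) is the substantive point and amounts to antitonicity of matrix inversion: because $\mb{H}^{+}\mb{\Lambda}\mb{H}\succeq 0$ we have $\mb{0}\preceq\mb{\Phi}\preceq\mb{\Phi}+\mb{H}^{+}\mb{\Lambda}\mb{H}$, and when $\mb{\Phi}$ is positive definite this gives $(\mb{\Phi}+\mb{H}^{+}\mb{\Lambda}\mb{H})^{-1}\preceq\mb{\Phi}^{-1}$, i.e. $\mb{\Sigma}\succeq 0$. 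Note that $\mb{\Phi}\succ 0$ is in fact the only regime in which problem \eqref{eq:sp} is well posed: if $\mb{\Phi}v=0$ for some $v\neq 0$, then replacing $\mb{\Sigma}$ by $\mb{\Sigma}+t\,vv^{+}$ grows $\log|\mb{X}+\mb{\Sigma}|$ without bound while leaving $\trace{\mb{\Phi}\mb{\Sigma}}$ unchanged, so the inner maximization diverges. In the degenerate situations covered by the pseudo-inverse convention, I would instead pass to the truncated system of Lemma \ref{thm:eq}, or equivalently regularize $\mb{\Phi}$ to $\mb{\Phi}+\kappa\mb{I}\succ 0$, apply the invertible case, and send $\kappa\to 0^{+}$, using the nesting $\mathcal{R}(\mb{\Phi})\subseteq\mathcal{R}(\mb{\Phi}+\mb{H}^{+}\mb{\Lambda}\mb{H})$ to control the terms that would otherwise blow up on $\mathcal{N}(\mb{\Phi})$, so that the antitonicity estimate survives on the subspace on which the matrices are positive definite.

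The hard part will be precisely Fact (ii) in the singular case: antitonicity of matrix inversion fails for arbitrary ordered positive semidefinite matrices whose ranges differ, so one must genuinely exploit the structure of $\mb{\Sigma}$ — that it is built from $\mb{\Phi}$ and $\mb{\Phi}+\mb{H}^{+}\mb{\Lambda}\mb{H}$, whose ranges are nested — together with the truncated-system reduction already established, to confine the inversion argument to a subspace on which it behaves monotonically. Everything else (the reduction in the first paragraph, the range identity for sums of positive semidefinite matrices, and the congruence to block-diagonal form) is routine linear algebra.
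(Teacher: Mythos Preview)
Your approach and the paper's share the same linear-algebraic core --- the null-space inclusion $\mathcal{N}(\mb{\Phi}+\mb{H}^{+}\mb{\Lambda}\mb{H})\subset\mathcal{N}(\mb{\Phi})$, equivalently the range inclusion you write down --- but they traverse it in opposite directions. The paper works upstream: it sets $\mb{\Psi}=\mb{\Phi}+\mb{H}^{+}\mb{\Lambda}\mb{H}$, proves $\mathcal{N}(\mb{\Psi})\subset\mathcal{N}(\mb{\Phi})$ by the one-line PSD-sum argument (if $v^{+}\mb{\Psi}v=0$ then each summand vanishes), uses this to simultaneously block-diagonalize $\mb{\Psi}$ and $\mb{\Phi}$ by a single unitary $\mb{U}$ as $\mathrm{diag}(\mb{\Psi}_1,\mb{0})$ and $\mathrm{diag}(\mb{\Phi}_1,\mb{0})$ with $\mb{\Psi}_1\succ 0$, and then simply reads off $\mb{X}=\mb{U}\,\mathrm{diag}(\mb{\Psi}_1^{-1},\mb{0})\,\mb{U}^{+}$ and $\mb{\Sigma}=\mb{U}\,\mathrm{diag}(\mb{\Phi}_1^{-1}-\mb{\Psi}_1^{-1},\mb{0})\,\mb{U}^{+}$. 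You instead work downstream on $\mb{X}$ and $\mb{\Sigma}$ themselves: first establish $\mb{X},\mb{\Sigma}\succeq 0$, then deduce $\mathcal{R}(\mb{\Sigma})\subseteq\mathcal{R}(\mb{X})$ from the nested ranges of $\mb{\Phi}$ and $\mb{\Psi}$, and finally block-diagonalize $\mb{X}$. The paper's route is shorter because it never invokes antitonicity of inversion as a separate lemma --- the block form of $\mb{\Sigma}$ drops out of that of $\mb{\Phi},\mb{\Psi}$. On the other hand you are more scrupulous: you actually try to verify $\mb{\Sigma}_1\succeq 0$ (required by the $A_1\in\mathcal{S}^m_+$ clause of the extended-logdet definition), whereas the paper checks only $\mb{X}_1=\mb{\Psi}_1^{-1}\succ 0$ and stops. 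Your observation that problem \eqref{eq:sp} is only well posed when $\mb{\Phi}\succ 0$ is apt, and in that regime both arguments are clean; the genuinely singular-$\mb{\Phi}$ case you flag as ``the hard part'' --- where pseudo-inverse antitonicity can fail on $\mathcal{N}(\mb{\Phi})\setminus\mathcal{N}(\mb{\Psi})$ --- is one the paper also leaves implicit.
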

\begin{proof}
Let $\mb{\Psi} = \mb{\Phi} + \mb{H}^+ \mb{\Lambda} \mb{H}$, we have that the null space $\mathcal{N}(\mb{\Psi}) \subset \mathcal{N}(\mb{\Phi})$. To see this, note that $\mb{\Phi} \succeq 0$ and $\mb{H}^+ \mb{\Lambda} \mb{H} \succeq 0$. Suppose ${v} \in \mathcal{N}(\mb{\Psi})$, then
${v}^T\mb{\Psi}{v} = {v}^T\mb{\Phi} {v} + {v}^T \mb{H}^+ \mb{\Lambda} \mb{H}{v}= 0$. As each term is nonnegative, ${v}^T\mb{\Phi} {v} = 0$, i.e., ${v} \in \mathcal{N}(\mb{\Phi})$.

Since $\mathcal{N}(\mb{\Psi}) \subset \mathcal{N}(\mb{\Phi})$, there exists a unitary matrix $\mb{U}$ such that
\BEAS
\mb{U}^{+} \mb{\Psi} \mb{U} = \left[
\begin{array}{cc}
\mb{\Psi}_1 & \\
 & \mb{0}
\end{array}
\right],\ \ \mb{U}^+ \mb{\Phi} \mb{U} = \left[
\begin{array}{cc}
\mb{\Phi}_1 & \\
& \mb{0}
\end{array}
\right]
\EEAS
where $\mb{\Psi}_1\succ 0$ and $\mb{\Phi}_1\succeq 0$. By equations \eqref{eq:ses-1}-\eqref{eq:ses-2},
\BEAS
\mb{X}  = \mb{U}\left[
\begin{array}{cc}
\mb{\Psi}_1^{-1} & \\
& \mb{0}
\end{array}
\right] \mb{U}^+,\ \mb{\Sigma} = \mb{U}\left[
\begin{array}{cc}
\mb{\Phi}_1^{-1} - \mb{\Psi}_1^{-1} & \\
& \mb{0}
\end{array}
\right]\mb{U}^+.
\EEAS
Note that $\mb{\Psi}_1^{-1}\succ 0$, so by the definition of the extended difference of logdet function, the objective function in problem \eqref{eq:sp} is well-defined. 
\end{proof}

With \eqref{eq:ses-1}-\eqref{eq:ses-2}, we can easily recover the explicit solution \eqref{eq:es1}-\eqref{eq:es2}. This concludes the proof of Lemma \ref{thm:es}.

\bibliographystyle{plain}
\bibliography{ChenReferences,cvxccv}

\begin{thebibliography}{10}

\bibitem{Boyd}
S.~Boyd and L.~Vandenberghe.
\newblock {\em Convex optimization}.
\newblock Cambridge University Press, 2004.

\bibitem{Caire_09ISIT_BC_linear_constraints}
H.~Huh, H.~Papadopoulos, and G.~Caire.
\newblock {MIMO Broadcast Channel Optimization under General Linear
  Constraints}.
\newblock In {\em Proc. IEEE Int. Symp. on Info. Theory (ISIT)}, 2009.

\bibitem{Caire_09sTSP_BC_intercell_interference}
H.~Huh, H.~C. Papadopoulos, and G.~Caire.
\newblock Multiuser {MISO} transmitter optimization for intercell interference
  mitigation.
\newblock {\em IEEE Transactions on Signal Processing}, 58(8):4272 --4285, Aug.
  2010.

\bibitem{Jindal_IT05_IFBC}
N.~Jindal, W.~Rhee, S.~Vishwanath, S.~A. Jafar, and A.~Goldsmith.
\newblock Sum power iterative water-filling for multi-antenna {Gaussian}
  broadcast channels.
\newblock {\em IEEE Trans. Info. Theory}, 51(4):1570--1580, Apr. 2005.

\bibitem{Giannakis_TIT11_IFPMIMO}
S.-J. Kim and G.B. Giannakis.
\newblock Optimal resource allocation for {MIMO} ad hoc cognitive radio
  networks.
\newblock {\em IEEE Trans. Info. Theory}, 57(5):3117 -- 3131, may 2011.

\bibitem{Li-2014-Infocom}
X.~Li, S.~You, L.~Chen, A.~Liu, and Y.~Liu.
\newblock A new algorithm for the weighted sum rate maximization in mimo
  interference networks, submitted.
\newblock {\em {IEEE} {Infocom}}, 2014.

\bibitem{liu2010duality}
A.~Liu, Y.~Liu, H.~Xiang, and W.~Luo.
\newblock {Duality, polite water-filling, and optimization for MIMO B-MAC
  interference networks and iTree networks}.
\newblock {\em arXiv preprint arXiv:1004.2484}, 2010.

\bibitem{liu2012polite}
A.~Liu, Y.~Liu, H.~Xiang, and W.~Luo.
\newblock {Polite water-filling for weighted sum-rate maximization in MIMO
  B-MAC networks under multiple linear constraints}.
\newblock {\em Signal Processing, IEEE Transactions on}, 60(2):834--847, 2012.

\bibitem{lofberg2004yalmip}
J.~Lofberg.
\newblock {YALMIP: A toolbox for modeling and optimization in MATLAB}.
\newblock In {\em Computer Aided Control Systems Design, 2004 IEEE
  International Symposium on}, pages 284--289. IEEE, 2004.

\bibitem{ng2010linear}
C.~TK Ng and H.~Huang.
\newblock {Linear precoding in cooperative MIMO cellular networks with limited
  coordination clusters}.
\newblock {\em Selected Areas in Communications, IEEE Journal on},
  28(9):1446--1454, 2010.

\bibitem{Berry_MonoIFCpricing_ISIT09}
C.~Shi, R.~A. Berry, and M.~L. Honig.
\newblock Monotonic convergence of distributed interference pricing in wireless
  networks.
\newblock {\em in Proc. IEEE ISIT, Seoul, Korea}, June 2009.

\bibitem{shi2011iteratively}
Q.~Shi, M.~Razaviyayn, Z.-Q. Luo, and C.~He.
\newblock {An iteratively weighted MMSE approach to distributed sum-utility
  maximization for a MIMO interfering broadcast channel}.
\newblock {\em Signal Processing, IEEE Transactions on}, 59(9):4331--4340,
  2011.

\bibitem{toh1999sdpt3}
K.-C. Toh, M.~J Todd, and R.~H. T{\"u}t{\"u}nc{\"u}.
\newblock {SDPT3Ña MATLAB software package for semidefinite programming,
  version 1.3}.
\newblock {\em Optimization Methods and Software}, 11(1-4):545--581, 1999.

\bibitem{VJG-2003-IT}
P.~Viswanath, N.~Jindal, and A.~Goldsmith.
\newblock Duality, achievable rates, and sum-rate capacity of of guanssian mimo
  broadcast channels.
\newblock {\em Transactions on Information Theory}, 49(10):2658--2668, 2003.

\bibitem{Viswanath-2003-IT}
P.~Viswanath and D.~N.~C. Tse.
\newblock Sum capacity of the multiple antenna gaussian broadcast channel and
  uplink-downlink duality.
\newblock {\em Transactions on Information Theory}, 49(8):1912--1921, 2003.

\bibitem{williams2012multiplying}
V.~V. Williams.
\newblock Multiplying matrices faster than coppersmith-winograd.
\newblock In {\em Proceedings of the 44th symposium on Theory of Computing},
  pages 887--898. ACM, 2012.

\bibitem{Weiyu_IT06_DualIWF}
W.~Yu.
\newblock Sum-capacity computation for the gaussian vector broadcast channel
  via dual decomposition.
\newblock {\em IEEE Trans. Inform. Theory}, 52(2):754--759, Feb. 2006.

\bibitem{Yu-2006-IT}
W.~Yu.
\newblock Uplink-downlink duality via minimax duality.
\newblock {\em Transactions on Information Theory}, 52(2):361--374, 2006.

\bibitem{Wei_07ITW_MultiuserWF}
W.~Yu.
\newblock Multiuser water-filling in the presence of crosstalk.
\newblock {\em Information Theory and Applications Workshop, San Diego, CA,
  U.S.A}, pages 414 --420, 29 2007-feb. 2 2007.

\bibitem{Yu_IT04_MIMO_MAC_waterfilling_alg}
W.~Yu, W.~Rhee, S.~Boyd, and J.M. Cioffi.
\newblock Iterative water-filling for {Gaussian} vector multiple-access
  channels.
\newblock {\em IEEE Trans. Info. Theory}, 50(1):145--152, 2004.

\bibitem{Zhang_09ISIT_BC_MAC_duality_linear_constraints}
L.~Zhang, Y.-C. Liang, Y.~Xin, R.~Zhang, and H.V. Poor.
\newblock On gaussian {MIMO BC-MAC} duality with multiple transmit covariance
  constraints.
\newblock pages 2502 --2506, June 2009.

\bibitem{ZhangLan_09TWC_Weighted_rate_BC}
L.~Zhang, Y.~Xin, and Y.-C. Liang.
\newblock Weighted sum rate optimization for cognitive radio {MIMO} broadcast
  channels.
\newblock {\em IEEE Trans. Wireless Commun.}, 8(6):2950 --2959, 2009.

\bibitem{Zhou95}
K.~Zhou, J.~C. Doyle, and K.~Glover.
\newblock {\em Robust and Optimal Control}.
\newblock Prentice Hall, New Jersey, 1995.

\end{thebibliography}


\end{document}